%%%%%%%%%%%%%%%%%%%%%%% file template.tex %%%%%%%%%%%%%%%%%%%%%%%%%
%
% This is a general template file for the LaTeX package SVJour3
% for Springer journals.          Springer Heidelberg 2010/09/16
%
% Copy it to a new file with a new name and use it as the basis
% for your article. Delete % signs as needed.
%
% This template includes a few options for different layouts and
% content for various journals. Please consult a previous issue of
% your journal as needed.
%
%%%%%%%%%%%%%%%%%%%%%%%%%%%%%%%%%%%%%%%%%%%%%%%%%%%%%%%%%%%%%%%%%%%
%
% First comes an example EPS file -- just ignore it and
% proceed on the \documentclass line
% your LaTeX will extract the file if required
% \begin{filecontents*}{example.eps}
% %!PS-Adobe-3.0 EPSF-3.0
% %%BoundingBox: 19 19 221 221
% %%CreationDate: Mon Sep 29 1997
% %%Creator: programmed by hand (JK)
% %%EndComments
% gsave
% newpath
%   20 20 moveto
%   20 220 lineto
%   220 220 lineto
%   220 20 lineto
% closepath
% 2 setlinewidth
% gsave
%   .4 setgray fill
% grestore
% stroke
% grestore
% \end{filecontents*}
%
\RequirePackage{fix-cm}
\documentclass{svjour3}                     % onecolumn (standard format)
\smartqed  % flush right qed marks, e.g. at end of proof
\usepackage{graphicx}
\usepackage{mathptmx}      % use Times fonts if available on your TeX system
%
% insert here the call for the packages your document requires
\usepackage{latexsym}
\usepackage{amsmath,epsfig}
\usepackage{amssymb}
\usepackage[utf8]{inputenc}  
\usepackage[T1]{fontenc} 
\usepackage{enumitem}
\usepackage{mathtools}
\usepackage{amssymb,amsfonts}
\usepackage{dsfont}
\usepackage{dsfont}
\usepackage{stmaryrd}
\usepackage{bm}
\usepackage{color}
\usepackage{soul}

\def\bx{{\bf x}}
\def\bW{{\bf W}}
\def\bX{{\bf X}}
\def\bY{{\bf Y}}
\def\bZ{{\bf Z}}
\def\bZt{\widetilde{\bf Z}}
%
% please place your own definitions here and don't use \def but
% \newcommand{}{}
%
% Insert the name of "your journal" with
% \journalname{Journal of Physics A: Mathematical and Theoretical}
%
\begin{document}

\title{The adaptive interpolation method\\ for proving replica formulas. Applications to \\the Curie-Weiss and Wigner spike models}

\titlerunning{The adaptive interpolation method for proving replica formulas.}        % if too long for running head

\author{Jean Barbier         \and
        Nicolas Macris %etc.
}

%\authorrunning{Short form of author list} % if too long for running head

\institute{J. Barbier$^{\dagger\star}$ \& N. Macris$^{*}$ \at
              $\dagger$ Quantitative Life Sciences, The Abdus Salam International Center for Theoretical Physics, Trieste, Italy.\\      
              $\star$ Physics Laboratory, Ecole Normale Sup\'erieure, Paris, France.\\
              $*$ Communication Theory Laboratory, Ecole Polytechnique F\'ed\'erale de Lausanne, Switzerland.\\          
              \email{jbarbier@ictp.it, nicolas.macris@epfl.ch}            %  \\
%             \emph{Present address:} of F. Author  %  if needed
}

% \date{Received: date / Accepted: date}
\maketitle

\begin{abstract}
In this contribution we give a pedagogic introduction to the newly introduced {\it adaptive interpolation method} to prove in a simple and unified way replica formulas for Bayesian optimal 
inference problems. Many aspects of this method can already be explained at the level of the simple Curie-Weiss spin system. This provides a new method of solution for this model
which does not appear to be known. We then generalize this analysis to a paradigmatic inference problem, namely rank-one matrix estimation, also refered to as the Wigner spike model in statistics. We give many pointers to the recent literature
where the method has been succesfully applied.
\keywords{adaptive interpolation \and Bayesian inference \and replica formula \and matrix estimation \and Wigner spike model \and Curie-Weiss model \and spin systems}
% \PACS{PACS code1 \and PACS code2 \and more}
% \subclass{MSC code1 \and MSC code2 \and more}
\end{abstract}

\section{Introduction}
\label{intro}
The replica method from statistical mechanics has been applied to Bayesian inference problems (e.g., coding, estimation) already two decades ago \cite{Tanaka,kabashima2003cdma}. Rigorous proofs 
of the formulas for the mutual informations/entropies/free energies stemming from this method, have for a long time only been partial, consisting generally of one sided 
bounds \cite{franz2003replica,Franz-Leone-Toninelli,KoradaMacris_CDMA,MontanariTse,Montanari-codes,Macris05CorrInequ,Macris07LDPC,Kudekar-Macris-2009}. It is only quite recently that there has been a surge of progress 
using various methods --namely spatial coupling \cite{Giurgiu_SCproof,XXT,2018arXiv181202537B,barbier_ieee_replicaCS,barbier_allerton_RLE}, 
information theory \cite{private}, and rigorous versions of the cavity method \cite{coja-2016,2016arXiv161103888L,2017arXiv170108010L}-- 
to derive full proofs, but which are typically quite complicated. 
Recently we 
introduced \cite{BarbierM17a} a powerful evolution of the Guerra-Toninelli \cite{generalised-mean-field,guerra2002thermodynamic,guerra2005introduction} 
interpolation method --called {\it adaptive interpolation}-- that allows to fully prove the replica formulas in a quite simple and unified way for Bayesian inference problems. In this contribution we give a pedagogic introduction to this new method. 

We first illustrate how the adaptive interpolation method allows to solve the 
well known Curie-Weiss model. For this model a simple version of the method contains most of the crucial ingredients. The 
present rigorous method of solution is new and perhaps simpler compared to the usual ones.  
Most of these ideas can be transferred to any Bayesian inference problem, and here this is reviewed in detail for the {\it rank-one matrix estimation} or factorisation problem (one of the simplest non-linear estimation problems). The solution of Bayesian inference problems requires only one supplementary ingredient:
 the concentration 
of the ``overlap'' with respect to {\it both} thermal and quenched disorder. Remarkably, this can be proven for Bayesian inference problems in a setting often called {\it Bayesian optimal inference}, i.e., when the prior and hyper-parameters are all known.

The adaptive interpolation method has been fruitfuly applied to a range of more difficult problems with a dense 
underlying graphical structure. So far these include 
matrix and tensor factorisation \cite{2017arXiv170910368B}, estimation in traditional and generalised 
linear models \cite{barbier2017phase} (e.g., compressed sensing and many of its non-linear variants), with random i.i.d. as well as special structured measurement matrices \cite{RLEStructuredMatrices}, learning 
problems in the teacher-student setting \cite{barbier2017phase,NIPS2018_7584} (e.g., the single-layer perceptron network) and even multi-layer versions \cite{NIPS2018_7453}. 
For inference problems with an underlying sparse graphical structure full proofs of replica formulas are scarce and much more involved, e.g., \cite{Giurgiu_SCproof,coja-2016}. The interpolation method and replica bounds for sparse systems have been pioneered by Franz and Leone in \cite{franz2003replica,Franz-Leone-Toninelli} (see also \cite{bayati2013,salez2016interpolation}) but so far the present adaptive interpolation method is still in its infancy for sparse systems \cite{eric-censor-block} and it would be desirable to develop it further.
The method was initially formulated with a more technical discrete interpolation scheme, but it was already observed that a continuous 
interpolation is natural \cite{BarbierM17a}. The continuous analysis presented here was then explicitly developed for the tensor factorization problem in \cite{2017arXiv170910368B}. Here we directly use the continuous version which is certainly more natural when the underlying graphical structure is dense. So far however, for sparse systems which present new difficulties, only the discrete analysis has been developed \cite{eric-censor-block}.

\section{The Curie-Weiss model revisited}
\label{sec:format}
Consider the Hamiltonian of the ferromagnetic Ising model on a complete graph of spins $\boldsymbol{\sigma}\in\{-1,1\}^n$, or Curie-Weiss model:
\begin{align*}
{\cal H}(\boldsymbol{\sigma}) \equiv -\frac{J}{n} \sum_{i<j} \sigma_i\sigma_j - h\sum_{i=1}^n\sigma_i
\end{align*}
where $J>0$ is the coupling constant and $h\in\mathbb{R}$ the external field. 
The free energy 
is 
\begin{align*}
f_n^{\rm cw} \equiv -  \frac{1}{\beta n} \ln \mathcal{Z}_n  = - \frac{1}{\beta n} \ln\sum_{\boldsymbol{\sigma}\in\{-1,1\}^n} e^{-\beta {\cal H}(\boldsymbol{\sigma})}
\end{align*}
with $\mathcal{Z}_n$ the partition function and $\beta>0$ the inverse temperature. 
% We also introduce the usual Gibbs-bracket notation for the thermal average of a function $A=A(\boldsymbol{\sigma})$ of the spins:
%
% \begin{equation*}
% \langle A\rangle \equiv \frac{1}{\mathcal{Z}_{n}} \sum_{\boldsymbol{\sigma}\in{\{-1, 1\}}^n}e^{-\beta \mathcal{H}(\boldsymbol{\sigma})}\,A(\boldsymbol{\sigma})\,.
% \end{equation*}
%

Let ${\cal A}\equiv[\beta(h-J),\beta(h+J)]$ and $f^{\rm pot}(m,\widehat m)=f^{\rm pot}(m,\widehat m; h,J,\beta)$ the {\it potential}:
\begin{equation}\label{pot_2}
f^{\rm pot}(m,\widehat m) \equiv - \frac1\beta \ln(2\,{\rm cosh}\,\widehat m) - \frac{Jm^2}{2} + \Big(\frac{\widehat m }\beta-h\Big)m\,. 
\end{equation}
Note that this potential verifies at its stationary point(s)
\begin{align}\label{6_}
\begin{cases}
\partial_{m} f^{\rm pot}(m,\widehat m) = 0 & \Leftrightarrow \quad\widehat m = \beta (Jm + h)\,,\\
\partial_{\widehat m} f^{\rm pot}(m,\widehat m) = 0 &\Leftrightarrow\quad  m = {\rm tanh}\,\widehat m\,.
\end{cases}
\end{align}
The purpose of this section is to prove in a new fashion the following variational formula for the free energy:
\begin{theorem}\label{thmCW}
The thermodynamic limit of the free energy for the Curie-Weiss model verifies
\begin{align*}
\lim_{n\to\infty} f_n^{\rm cw} &=
 {\adjustlimits \inf_{\widehat m\in{\cal A}}\sup_{m\in [-1, 1]}} f^{\rm pot}(m,\widehat m) \,.
\end{align*}	
\end{theorem}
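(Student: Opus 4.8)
The plan is to run an adaptive interpolation between the Curie--Weiss model and a fully decoupled single--spin model, in the usual two--sided fashion. Fix a path $q:[0,1]\to[-1,1]$, to be chosen later, write $m\equiv\frac1n\sum_i\sigma_i$, and for $t\in[0,1]$ introduce
\[
\mathcal{Z}_{n,t}\equiv\sum_{\boldsymbol{\sigma}}\exp\!\Big(\frac{t\beta J}{2n}\Big(\sum_i\sigma_i\Big)^{2}+\beta\Big(h+J\!\int_t^1\! q(s)\,ds\Big)\sum_i\sigma_i\Big),\qquad f_n(t)\equiv-\frac1{\beta n}\ln\mathcal{Z}_{n,t}.
\]
Using $\sum_{i<j}\sigma_i\sigma_j=\tfrac12((\sum_i\sigma_i)^{2}-n)$ and that the $q$--term vanishes at $t=1$, one gets $f_n(1)=f_n^{\rm cw}-\frac{J}{2n}$, while at $t=0$ the measure factorizes over sites and $f_n(0)=-\frac1\beta\ln\!\big(2\cosh\beta(h+J\bar q)\big)$ with $\bar q\equiv\int_0^1 q$. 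Differentiating and completing the square (with $\langle\cdot\rangle_t$ the Gibbs average associated with $\mathcal{Z}_{n,t}$) gives $\frac{d}{dt}f_n(t)=-\frac J2\langle m^{2}\rangle_t+Jq(t)\langle m\rangle_t=\frac J2 q(t)^{2}-\frac J2\langle(m-q(t))^{2}\rangle_t$, hence the sum rule
\[
f_n^{\rm cw}=f_n(0)+\frac J2\int_0^1 q(t)^{2}\,dt-\frac J2\int_0^1\big\langle (m-q(t))^{2}\big\rangle_t\,dt+\frac{J}{2n}.
\]

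For the upper bound I take $q(t)\equiv q$ constant with $q\in[-1,1]$, so $\widehat m\equiv\beta(h+Jq)\in{\cal A}$. Dropping the nonnegative remainder and using \eqref{pot_2} and \eqref{6_}, one checks $f_n(0)+\frac J2 q^{2}=f^{\rm pot}(q,\widehat m)=\sup_{m\in[-1,1]}f^{\rm pot}(m,\widehat m)$, the last equality because $f^{\rm pot}(\cdot,\widehat m)$ is concave with interior maximizer $m=q$ precisely when $\widehat m\in{\cal A}$. Thus $f_n^{\rm cw}\le\sup_{m}f^{\rm pot}(m,\widehat m)+\frac{J}{2n}$ for every $\widehat m\in{\cal A}$, and optimizing over $\widehat m$ gives $\limsup_n f_n^{\rm cw}\le{\adjustlimits\inf_{\widehat m\in{\cal A}}\sup_{m\in[-1,1]}}f^{\rm pot}(m,\widehat m)$.

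For the lower bound, observe that for \emph{any} admissible $q$ Jensen's inequality $\int_0^1 q^{2}\ge\bar q^{\,2}$ turns the sum rule into $f_n^{\rm cw}\ge f^{\rm pot}(\bar q,\widehat m)-\frac J2\int_0^1\langle(m-q(t))^{2}\rangle_t\,dt+\frac{J}{2n}$ with $\widehat m=\beta(h+J\bar q)\in{\cal A}$, since $|q(t)|=|\langle m\rangle_t|\le1$ forces $\bar q\in[-1,1]$, and $f^{\rm pot}(\bar q,\widehat m)=\sup_m f^{\rm pot}(m,\widehat m)\ge\inf_{\widehat m'\in{\cal A}}\sup_m f^{\rm pot}(m,\widehat m')$. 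It remains to kill the remainder, which is the point of the \emph{adaptive} choice: let $q$ solve the fixed-point relation $q(t)=\langle m\rangle_t$. For fixed $n$ this is an ODE for $R(t)\equiv\int_t^1 q$, namely $R'(t)=-\langle m\rangle_t$, $R(1)=0$, whose right-hand side is smooth and bounded by $1$, so Cauchy--Lipschitz on the compact interval yields a solution with $|q(t)|\le1$. With this choice the remainder equals $\frac J2\int_0^1\langle(m-\langle m\rangle_t)^{2}\rangle_t\,dt$, and the proof is complete once one establishes the concentration of the magnetization, $\int_0^1\langle(m-\langle m\rangle_t)^{2}\rangle_t\,dt\to0$ — the Curie--Weiss analogue of the overlap concentration needed in Bayesian inference.

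I expect this concentration step to be the main obstacle. It is false as stated at $h=0$ in the ferromagnetic phase $\beta J>1$, since the $t=1$ model retains its $\mathbb{Z}_2$ symmetry and $m$ splits between $\pm m^{*}$, keeping the variance of order one. The standard remedy, which I would implement, is to add a small external field $\epsilon$ to the Hamiltonian (and hence to the interpolating one), chosen so that the effective field felt at every $t\in[0,1]$ stays bounded away from the critical value $0$; convexity of the free energy in the external field then forces $\langle(m-\langle m\rangle_t)^{2}\rangle_t=O(1/n)$ uniformly in $t$ (averaging $\epsilon$ over a small vanishing interval, if needed, removes a measure-zero set of bad values). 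Since $|f_n^{{\rm cw},\epsilon}-f_n^{\rm cw}|\le|\epsilon|$ and $f^{\rm pot}$ depends continuously on $h$, letting $\epsilon\to0$ at the end gives $\liminf_n f_n^{\rm cw}\ge{\adjustlimits\inf_{\widehat m\in{\cal A}}\sup_{m\in[-1,1]}}f^{\rm pot}(m,\widehat m)$, which together with the upper bound proves the theorem.
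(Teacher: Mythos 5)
Your interpolation and sum rule are correct and lead to a valid proof of the theorem, but they follow a different route from the paper's Section~2: what you do is essentially the one--parameter (``completing the square'') interpolation of the paper's Section~3, then recover the two--parameter variational formula of Theorem~\ref{thmCW} at the end by setting $\widehat m=\beta(h+J\bar q)$, as in Remark~\ref{simpleexpressionCW}. Concretely, the paper's Section~2 interpolates linearly between $\mathcal{H}$ and an effective field $\widehat R_\epsilon(t)/\beta$, and its lower bound uses the concavity of $x\mapsto -\ln(2\cosh x)$ to pull the $t$-integral into the $\cosh$; you instead keep $h$ fixed, phase in a field $J\int_t^1 q$, complete the square in $\langle m^2\rangle$, and use Jensen's inequality $\int_0^1 q^2\ge\bar q^{\,2}$. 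Both convexity steps are legitimate; yours is cleaner for Curie--Weiss but, as the paper warns, this one--parameter route does not extend to problems like generalized linear estimation or non-symmetric tensors, which is why the paper emphasizes the two--parameter machinery.

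The genuine gap is the concentration step, which you flag but leave vague in a way that hides the key technical point. You write that ``averaging $\epsilon$ over a small vanishing interval, if needed, removes a measure-zero set of bad values''; this is not the mechanism. What makes the argument work is that, for your adaptive choice $q_\epsilon(t)=\langle m\rangle_{t,\epsilon}$, the $t$-dependent effective field $\mathcal{R}_\epsilon(t)\equiv h+\epsilon+J\int_t^1 q_\epsilon(s)\,ds$ must be shown to be a \emph{regular} $\mathcal{C}^1$ function of $\epsilon$ with Jacobian $\partial_\epsilon\mathcal{R}_\epsilon(t)\ge 1$ uniformly in $t$. Only then can one change variables from $\epsilon$ to the effective field and deduce, using that the second $\mathcal{R}$-derivative of $f_{n,\epsilon}$ equals $-(n/\beta)\,\big\langle(M-\langle M\rangle_{t,\epsilon})^2\big\rangle_{t,\epsilon}$, the bound
\begin{align*}
\frac{1}{s_n}\int_{s_n}^{2s_n}d\epsilon\,\big\langle(M-\langle M\rangle_{t,\epsilon})^2\big\rangle_{t,\epsilon}
\le \frac{C}{n\,s_n}
\end{align*}
for a constant $C$. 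This is $O(1/(n s_n))$, \emph{after} averaging over $\epsilon$, not ``$O(1/n)$ uniformly in $t$''; one then needs to choose $s_n\to 0$ slowly enough that $ns_n\to\infty$. In your setup the regularity can indeed be verified: differentiating your (backward) ODE $R_\epsilon'(t)=-\langle m\rangle_{t,\epsilon}$ with $R_\epsilon(1)=0$ in $\epsilon$ and using a Liouville/Gr\"onwall computation gives $\partial_\epsilon\mathcal{R}_\epsilon(t)=\exp\!\big(J\int_t^1 a_\epsilon(s)\,ds\big)\ge 1$ with $a_\epsilon(s)=n\beta\,\big\langle(M-\langle M\rangle_{s,\epsilon})^2\big\rangle_{s,\epsilon}\ge 0$. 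But this regularity check is exactly what is missing from your sketch, and without it the $\epsilon$-averaging step is unjustified. Once you add it, the proof closes.
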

\begin{remark}\label{simpleexpressionCW}
A trivial maximization over $m$ yields
\begin{align*}
{\adjustlimits \inf_{\widehat m\in{\cal A}}\sup_{m\in [-1, 1]}} f^{\rm pot}(m,\widehat m) & = \inf_{\widehat m\in{\cal A}} \Big\{\frac{1}{2J}
\Big(\frac{\widehat m}{\beta} -h\Big)^2-\frac{1}{\beta}\ln(2\,{\rm cosh}\,\widehat m) \Big\}
\\ &
=
\inf_{m\in [-1, 1]} \Big\{\frac{J m^2}{2}-\frac{1}{\beta}\ln\big(2\,{\rm cosh}\,[\beta\{Jm+h\}]\big)\Big\} 
\end{align*}
where the second equality is obtained from the first one by parametrizing $\widehat m = \beta(Jm + h)$, $m\in [-1, 1]$. 
This well known variational formula can also be obtained directly by a simpler formulation of the adaptive interpolation discussed 
in section \ref{anotherpath_cw}. However, this simpler formulation is not powerful enough for more complicated problems.
%, and one really has to derive the fomula related to the two-parameters potential $f^{\rm pot}(m,\widehat m)$.
\end{remark}
\begin{remark}
It is possible to check that 
$$
{\adjustlimits \inf_{\widehat m\in{\cal A}}\sup_{m\in [-1, 1]}} f^{\rm pot}(m,\widehat m)= {\adjustlimits \inf_{m\in [-1, 1]}\sup_{\widehat m\in{\cal A}}}\, f^{\rm pot}(m,\widehat m)\, .
$$
For the present model this can be checked by explicit computation of both sides. Otherwise, this follows 
directly using that the potential equals $\widehat m\, m/\beta$ minus a convex function of $m$ minus another convex function of $\widehat m$ (see e.g., the appendix D of \cite{barbier2017phase}). Finally we also note that maximization over $\widehat m$ in the r.h.s. yields another well known formula
for the free energy:
$$
\lim_{n\to\infty} f_n^{\rm cw} = 
\inf_{m\in [-1, +1]} \Big\{-\frac{J m^2}{2} - hm - \frac1\beta h_2(m) \Big\}
$$
with $h_2(m) \equiv - \frac{1+m}{2} \ln\frac{1+m}{2} - \frac{1-m}{2} \ln \frac{1-m}{2}$ the binary entropy function.
\end{remark}

\subsection{Adaptive interpolation}
Before entering the proof let us give the generic roadmap of the adaptive interpolation method, and emphasize the main differences with the canonical Guerra-Toninelli interpolation method \cite{generalised-mean-field,guerra2002thermodynamic,guerra2005introduction}. 

The aim of the method is to prove a variational formula (such as Theorem~\ref{thmCW}) for the thermodynamic limit of the free energy of some complex statistical model of interacting variables/spins. This variational formula corresponds to the extremization of a proper potential (in the present case \eqref{pot_2}). $i)$ The first step consists in defining an ``interpolating model'' parametrized by ``time'' $t\in[0,1]$. Its associated $t$-dependent free energy $f(t)$ must interpolate between the one of the model of interest at, say, $t=0$, and the one of a properly chosen ``decoupled model'' at $t=1$ where the variables do not interact anymore and with a tractable free energy (tractable because the system is decoupled) which constructs part of the potential. The basic idea is therefore similar as the canonical interpolation method except that usually the interpolation path depends ``trivially'' on $t$ while in the adaptive interpolation method the interpolation path depends on $t$ in a totally generic way through the definition of {\it interpolation functions} that allow for much more flexibility. $ii)$ In the second step we want to ``compare'' the two boundary values of the free energy using $f(0)=f(1)-\int_0^1 dt\,f'(t)$ as usual, where $f(0) = f$ is what we want to compute while $f(1)$ is a piece of the potential. We therefore need to compute the $t$-derivative $f'(t)$. When $f'(t)$ is then plugged in the previous relation this gives the so-called sum rule, which links the free energy of interest and the potential (or part of it). $iii)$ The third step consists in simplifying the obtained sum rule thanks to the concentration of the identified order parameter of the problem (the magnetization in the Curie-Weiss model, or the overlap with the planted solution/signal in Bayesian inference problems). Self-averaging of the order parameter, often refered to as {\it replica symmetry} \cite{MezardParisi87b}, has to be proven for all $t\in[0,1]$. It requires a proper ``perturbation'' of the model with a strenght controlled by a perturbation parameter $\epsilon$. Perturbing the system allows to ``avoid'' possible isolated phase transitions points where concentration does not occur. This step is model-dependent as such results can be proven only under specific settings. For ferromagnetic models (such as the Curie-Weiss model) at any temperature, Bayesian inference (in the so-called ``Bayesian optimal setting'' discussed in sec.~\ref{sec:matrix}), or generic disordered spin models at high temperature this is doable. In the first case thanks to the ferromagnetic nature of the model (see \cite{2019arXiv190106521B} for a proof that ferromagnetism implies replica symmetry in full generality), in the second case thanks to special identities called ``Nishimori identities'' (see appendix (A1)) for the correlation functions of the model, and in the last case by concentration techniques \cite{Talagrand2011spina,Talagrand2011spinb}. But away from these settings, e.g., in combinatorial optimization, in generic disordered spin models at low temperature, or in non-optimal Bayesian inference, this is usually not possible as {\it replica symmetry breaking} may occur \cite{MezardParisi87b,Talagrand2011spina,Talagrand2011spinb} and prevents the order parameter to concentrate\footnote{At the moment the adaptive interpolation method is specifically designed for replica symmetric models but it is an interesting direction to see wether it can tackle more complicated models, e.g., by being combined with ideas coming from \cite{Guerra-2003}.}. $iv)$ In a fourth step, once the sum rule has been simplified thanks to the order parameter concentration, the flexibility allowed by the choice of the interpolation functions (i.e., the choice of the interpolation path) is exploited in order to obtain two matching bounds between the potential and the free energy, and thus the result. One bound is simply obtained by choosing a ``trivial'' interpolation path. The other one requires a smarter choice: it appears that, given the decoupled model towards which we interpolate, there is a unique choice of the interpolation functions allowing to obtain the converse bound. This choice corresponds to the solution of a first order differential equation over the interpolation functions (in which the perturbation parameter $\epsilon$ will play the role of the initial condition). The interpolation functions have therefore been {\it adapted} in order to finish the proof, thus the name of the method.\\

We now formalize these ideas. Let $\epsilon \in[s_n, 2s_n]$ for a sequence $s_n\in(0,1/2]$ that tends to $0_+$ as $(1/2)n^{-\alpha}$ for $0<\alpha<1$. 
 Here $\epsilon$ is interpreted as a ``perturbation field'' that will soon play a crucial role (note that this field could also belong to $[-s_n, s_n]$ without changing the sub-sequent analysis) . Let $\widehat m_{\epsilon}: [0, 1] \mapsto  {\cal A}$ a ``trial'' {\it interpolation function} depending on an interpolating parameter $t\in [0, 1]$ and on $\epsilon$. The interpolation function is at the moment generic, and will be chosen (adapted) later on. Then set $\widehat R_\epsilon(t)\equiv \epsilon+\int_0^tds\,\widehat m_\epsilon(s)$.
Define an interpolating Hamiltonian, with interpolating parameter 
$t\in [0, 1]$, as
\begin{equation*}
{\cal H}_{t,\epsilon}(\boldsymbol{\sigma}) \equiv  (1-t) {\cal H}(\boldsymbol{\sigma}) - \frac{\widehat R_\epsilon(t)}\beta \sum_{i=1}^n\sigma_i
\end{equation*}
and the corresponding interpolating free energy
\begin{align*}
 f_{n,\epsilon}(t) \equiv - \frac1{\beta n}\ln {\cal Z}_{n,\epsilon}(t) =  - \frac1{\beta n}\ln \sum_{\boldsymbol{\sigma}\in\{-1,1\}^n} e^{-\beta {\cal H}_{t,\epsilon}(\boldsymbol{\sigma})} \,.
\end{align*}
In this definition ${\cal Z}_{n,\epsilon}(t)$ is the partition function associated with the interpolating Hamiltonian. This model interpolates between the Curie-Weiss model at $t=0$ (with a slightly different external field $h+\epsilon/\beta$) and a decoupled spin model at $t=1$ with mean-field controlled by the trial function and the perturbation. It is then easy to verify that 
\begin{align}\label{boundaries}
\begin{cases}
%f_{n,\epsilon}(0)  =  f_{n}^{\rm cw}(h+\epsilon)=f_{n}^{\rm cw}(h)+O(s_n)\,,\\
f_{n,\epsilon}(0)  =f_{n}^{\rm cw}+O(s_n)\,,\\
f_{n,\epsilon}(1)  = - \beta^{-1}\ln(2 \,{\rm cosh}\,\widehat R_\epsilon(1))=- \beta^{-1}\ln(2 \,{\rm cosh}\,\int_0^1dt\, \widehat m_\epsilon(t))+O(s_n)\,.
\end{cases} 
\end{align}
To obtain the equality in the first line we use that $|df_{n}^{\rm cw}/dh| = |\langle M\rangle|\le1$, where the magnetization $M(\boldsymbol{\sigma})= M \equiv \frac{1}{n}\sum_{i=1}^n \sigma_i\in[-1,1]$ and $\langle - \rangle$ is the thermal average, i.e., the expectation w.r.t. the measure proportional to $e^{-\beta {\cal H}(\boldsymbol{\sigma})}$. Therefore $\vert f_{n}^{\rm cw}(h+\epsilon/\beta) - f_n^{\rm cw}(h)\vert \leq 2s_n/\beta$ by the mean value theorem, where $f_n^{\rm cw}(h)=f_n^{\rm cw}$. 
In the second equality of the second line the perturbation term $\epsilon\in[s_n,2s_n]$ has been extracted by continuity. 

In order to compare the free energy of the Curie-Weiss model with the potential we use the fundamental theorem of calculus. Using \eqref{boundaries} we find directly
\begin{align}
f_{n}^{\rm cw} = f_{n,\epsilon}(1) - \int_0^1 dt \frac{df_{n,\epsilon}(t)}{dt} + O(s_n)\, . \label{thCalc}
\end{align}
We now naturally turn to the calculation of $df_{n,\epsilon}/dt$. Let us introduce the notation $\langle -\rangle_{t,\epsilon}$ for the Gibbs bracket 
of the interpolating system, which is the thermal average of a function $A=A(\boldsymbol{\sigma})$ of the spins:
\begin{equation*}
\langle A\rangle_{t,\epsilon} \equiv \frac{1}{\mathcal{Z}_{n,\epsilon}(t)} \sum_{\boldsymbol{\sigma}\in{\{-1, 1\}}^n}e^{-\beta \mathcal{H}_{t,\epsilon}(\boldsymbol{\sigma})}\,A(\boldsymbol{\sigma})\,.
\end{equation*}
We compute 
\begin{align}\label{tder_ising}
\frac{df_{n,\epsilon}(t)}{dt} &=  \frac{1}{n}\Big\langle \frac{d{\cal H}_{t,\epsilon}}{dt}\Big\rangle_{t,\epsilon} = \frac{J}{2} \langle M^2 \rangle_{t,\epsilon}  +\Big(h-\frac{\widehat m_\epsilon(t)}\beta \Big)\langle M \rangle_{t,\epsilon}+ O(1/n)\,.
% &= -\frac{J}{2} \langle m \rangle_t^2 - h \langle m \rangle_t+ \beta^{-1}\widehat m(t)\langle m \rangle_t + \smallO_n(1)\,,
\end{align}
Replacing \eqref{boundaries} and \eqref{tder_ising} in \eqref{thCalc} yields the following fundamental {\it sum rule}:
\begin{align*}
f_n^{\rm cw} 
=  &- \frac1\beta\ln\Big(2 \,{\rm cosh}\int_0^1dt \,\widehat m_\epsilon(t)\Big)- \int_0^1dt \Big\{ \frac{J}{2}\langle M^2 \rangle_{t,\epsilon} +\Big(h-\frac{\widehat m_\epsilon(t)}\beta \Big)\langle M \rangle_{t,\epsilon}\Big\} + O(s_n)\,.
\end{align*}
\subsection{Simplifying the sum rule: concentration of the magnetization} \label{sec:2.2}
At this stage we need a concentration result for the magnetization, that states
\begin{align}\label{concen-cw}
\frac{1}{s_n}\int_{s_n}^{2s_n} d\epsilon\, \big\langle (M - \langle M\rangle_{t, \epsilon})^2\big\rangle_{t,\epsilon} \le  \frac2{ns_n}
\end{align}
and which holds for all values of the temperature, coupling constant and magnetic field. This is where the perturbation field $\epsilon$ plays a crucial role. 
For the Curie-Weiss model the proof is elementary and goes as follows. The thermal fluctuations of the magnetization are precisely given by the second derivative of the interpolating free energy (denoted $f_{n,\epsilon}(t,\widehat R_\epsilon(t))$ when we need to emphasize its explicit $\widehat R_\epsilon$-dependence) w.r.t. $\widehat R_\epsilon\equiv \widehat R_\epsilon(t)$:
\begin{align}
\frac{d^2 f_{n,\epsilon}}{d\widehat R_\epsilon^2}(t,\widehat R_\epsilon(t))  = - \frac{n}{\beta} \big\langle (M - \langle M\rangle_{t, \epsilon})^2\big\rangle_{t,\epsilon}\,.\label{2ndDer}
\end{align} 
Assume that the map $\epsilon \mapsto \widehat R_\epsilon(t)$ is a ${\cal C}^1$ diffeomorphism\footnote{Recall a diffeomorphism is a bijection which is continuously differentiable and whose inverse is also continuously differentiable.} whose Jacobian $\partial_\epsilon \widehat R_\epsilon(t)$ is greater or equal to one for all $t\in[0,1]$; we will say in this case that $\epsilon \mapsto \widehat R_\epsilon(t)$ is {\it regular}. Under this assumption
% , and using that $[R_{s_n}(t), R_{2s_n}(t)] \subset {\cal R}_n\equiv [s_n-\beta(|h|+J), 2s_n + \beta(|h|+J)]$, 
we can write
\begin{align*}
&\int_{s_n}^{2s_n} d\epsilon\,   \big\langle (M - \langle M\rangle_{t, \epsilon})^2\big\rangle_{t,\epsilon} \le \int_{\widehat R_{s_n}(t)}^{\widehat R_{2s_n}(t)} d\widehat R_\epsilon\,   \big\langle (M - \langle M\rangle_{t, \epsilon})^2\big\rangle_{t,\epsilon} \,.
% %
% =- \frac{\beta}{ns_n} \int_0^1 dt\, \Big(\frac{d f_{n, \epsilon}(t)}{d\epsilon}\Big\vert_{\epsilon=2s_n} - \frac{d f_{n, \epsilon}(t)}{d\epsilon}\Big\vert_{\epsilon=s_n}\Big)\, .
\end{align*}
Then using \eqref{2ndDer} this leads to
\begin{align*}
\frac1{s_n}\int_{s_n}^{2s_n} d\epsilon\,  & \big\langle (M - \langle M\rangle_{t, \epsilon})^2\big\rangle_{t,\epsilon} \le - \frac{\beta}{ns_n} \, \Big(\frac{d f_{n, \epsilon}}{d\widehat R_\epsilon}(t,\widehat R_{2s_n}(t)) - \frac{d f_{n, \epsilon}}{d\widehat R_\epsilon}(t,\widehat R_{s_n}(t))\Big)\, .
\end{align*}
Moreover note that $d f_{n, \epsilon}/ d\widehat R_\epsilon= - \langle M\rangle_{t,\epsilon}/\beta$ which is bounded by $1/\beta$ in absolute value. 
Therefore the r.h.s. is bounded by $2/(ns_n)$ which proves \eqref{concen-cw}.
%Therefore the r.h.s. is bounded by $2/\beta$. Thus the l.h.s. is bounded in absolute value by $1/n$ which proves \eqref{concen-cw}. 

%%{\color{blue} JEAN ICI J'AI COMMENTER LA DISCUSSION. Je ne sais pas trop si elle etait super comprehensible en fait. C'est subtil quand meme: on paye le prix a cause de la transformation entre epsilon et R a cause de notre methode d'interpolation. Mais dans la premiere version avec que la perturbation epsilon on ne payait aucun prix (ou bien voir style ising avec Eric). Donc en fait il ya plusieures formes de ces petits thms de concentration. D'ou peut etre vaut mieux rien dire plutot que dire des choses pas si claires que ca.}
%\begin{remark}
%We know on physical grounds that the optimal scaling is $1/n$, so that it seems at first sight that the $1/s_n$ in the concentration bound \eqref{concen-cw} is a sub-optimal price we paid from the change of variable. But actually this term is not just a technicality, as it prevents taking $s_n\to0_+$ too fast. Indeed, a too small perturbation should not have any impact on the system. Thus for the concentration to hold, this perturbation must be strong enough, which in enforced by the constraint $ns_n\to+\infty$ when $n\to+\infty$.
%\end{remark}

Now, integrating the fundamental sum rule over $\epsilon$ and using this concentration result (with a sequence $s_n$ vanishing more slowly than $1/(ns_n)$ as $n\to+\infty$, i.e., with $0<\alpha <1/2$)
\begin{align}
f_n^{\rm cw}
=  \frac{1}{s_n}\int_{s_n}^{2s_n} d\epsilon \Big[&-\frac1\beta\ln\Big(2 \,{\rm cosh}\int_0^1dt\, \widehat m_\epsilon(t)\Big)\nonumber\\
&- \int_0^1 dt\Big\{ \frac{J}{2}\langle M \rangle_{t,\epsilon}^2 +\Big(h-\frac{\widehat m_\epsilon(t)}\beta \Big)\langle M \rangle_{t,\epsilon}\Big\} \Big]+ O(s_n)
\label{sumrule}
\end{align}
where $O(s_n)$ is uniform in $t$, $\epsilon$ and $\widehat R_\epsilon$. Note that this identity is valid for an arbitrary trial function $\widehat m_\epsilon$ as long as $\epsilon \mapsto \widehat R_\epsilon(t)$ is regular, a condition that we have to verify when using this sum rule for specific trial functions.
\subsection{Matching bounds}
%
% We are now in a position to prove matching bounds in a unified manner.
\subsubsection{Upper bound}
Fix the constant function $\widehat m_{\epsilon}(t)=\widehat m\in{\cal A}$. This trivially makes the map $\epsilon\mapsto \widehat R_{\epsilon}(t)=\epsilon+\widehat m \,t$ regular. Using this choice in the sum rule \eqref{sumrule}, and recalling the definition \eqref{pot_2} of the potential, we directly obtain
\begin{align*}
f_{n}^{\rm cw} = \frac{1}{s_n}\int_{s_n}^{2s_n}d\epsilon\int_0^1dt\, f^{\rm pot}(\langle M \rangle_{t,\epsilon},\widehat m) +O(s_n)\le \sup_{m\in[-1,1]} f^{\rm pot}(m,\widehat m)+O(s_n)\,.
\end{align*}
This is is true for any $\widehat m\in{\cal A}$, therefore $\limsup_{n\to\infty}f_{n}^{\rm cw}(h)\le {\adjustlimits\inf_{\widehat m\in{\cal A}}\sup_{m\in[-1,1]}} f^{\rm pot}(m,\widehat m)$.
\subsubsection{Lower bound}
Now we choose $\widehat m_\epsilon(t)$ to be the solution of
\begin{align}\label{diff-equ}
\widehat m_\epsilon(t) &= \beta(J\langle M \rangle_{t,\epsilon} + h) \,.
\end{align}
Here one has to be careful and ask whether this equation possesses a solution, as the r.h.s. depends on the interpolation path through the function $\widehat R_\epsilon(t)=\epsilon+\int_0^tds\, \widehat m_\epsilon(s)$. Here is a crucial observation: from the set-up of the interpolation, the l.h.s. 
is $\widehat m_\epsilon(t) = d\widehat R_\epsilon(t)/dt$ and the r.h.s. 
 is a function $F_n: [0, 1]\times [s_n-\beta(|h|+J), 2s_n + \beta(|h|+J)] \mapsto 
F_n(t , \widehat R_\epsilon(t))=\beta(J \langle M\rangle_{t,\epsilon} + h) \in{\cal A}$ so \eqref{diff-equ} is a first order differential equation (ODE)
\begin{align}\label{crucial}
\frac{d \widehat R_\epsilon(t)}{dt} = F_n(t , \widehat R_\epsilon(t))\quad\text{with initial condition}\quad \widehat R_\epsilon(0)=\epsilon\,.
\end{align}
Thus the ``perturbation'' actually serves as initial condition of this ODE. 
The function $F_n$ is ${\cal C}^1$ with bounded derivative w.r.t. its second argument. Indeed $\partial_{\widehat R_\epsilon} F_n(s , \widehat R_\epsilon(s))= \beta Jn\langle (M -\langle M\rangle_{s,\epsilon})^2\rangle_{s,\epsilon}$ which is finite for finite $n$. 
Therefore we can apply the Cauchy-Lipshitz theorem to assert that \eqref{crucial} possesses a unique global solution over $[0, 1]$ that we denote $\widehat R_{\epsilon,n}^*(t)=\epsilon+\int_0^tds\,\widehat m_{\epsilon,n}^*(s)$, with $\widehat m_{\epsilon,n}^*:[0, 1] \mapsto {\cal A}$.

We want to replace this solution in \eqref{sumrule}. Thus we have to check that the flow $\epsilon \mapsto \widehat R_{\epsilon,n}^*(t)$ of this ODE is regular (i.e., a ${\cal C}^1$ diffeomorphism with Jacobian greater or equal to one). The argument is as follows. The flow is injective by unicity of the solution and ${\cal C}^1$ since $F_n$ is itself ${\cal C}^1$. By the Liouville formula for the Jacobian (see \cite{hartmanordinary} Corollary 3.1 in Chapter V)
\begin{align*}
\frac{\partial \widehat R_{\epsilon,n}^*(t)}{\partial \epsilon}=\exp\Big(\int_0^tds \frac{\partial F_n}{\partial \widehat R_\epsilon}(s,\widehat R_{\epsilon,n}^*(s))\Big) 
\end{align*}
which is greater than one because $\partial_{\widehat R_\epsilon} F_n(s , \widehat R_\epsilon(s))= \beta Jn\langle (M -\langle M\rangle_{s,\epsilon})^2\rangle_{s,\epsilon} \ge 0$. Also, this Jacobian never vanishes so the local inversion theorem combined with the fact that the flow is injective implies that this flow is a ${\cal C}^1$ diffeomorphism. We can thus use the sum rule \eqref{sumrule}.

The concavity of $-\ln(2\,{\rm cosh}\,x)$ allows to extract the $t$-integral from the first term in \eqref{sumrule}:
\begin{align}
f_n^{\rm cw} \!
&\ge \!\frac{1}{s_n}\int_{s_n}^{2s_n}d\epsilon \int_0^1dt  \Big\{\!
- \! \frac1\beta\ln\big(2 \,{\rm cosh}\, \widehat m_{\epsilon,n}^*(t)\big) \! - \! \frac{J}{2}\langle M \rangle_{t,\epsilon}^2 
\! + \!
\Big(\frac{\widehat m_{\epsilon,n}^*(t)}\beta-h \Big)\langle M \rangle_{t,\epsilon}\Big\} \!+\! O( s_n )\nonumber\\
&=\frac{1}{s_n}\int_{s_n}^{2s_n}d\epsilon\int_0^1 dt\, f^{\rm pot}(\langle M \rangle_{t,\epsilon},\widehat m_{\epsilon,n}^*(t))+ O(s_n)\label{sr_final}
\end{align}
by recognizing the expression of the potential \eqref{pot_2}. A crucial observation is that with our particular choice of interpolating function we have
\begin{align}
f^{\rm pot}(\langle M \rangle_{t,\epsilon},\widehat m_{\epsilon,n}^*(t)) = \sup_{m\in [-1,1]} f^{\rm pot}(m,\widehat m_{\epsilon,n}^*(t))\,.\label{supCond}
\end{align}
Indeed $m\mapsto f^{\rm pot}(m,\widehat m)$ is concave as $\partial^2f^{\rm pot}(m,\widehat m)/\partial m^2= -J\le 0$. Then, by \eqref{6_}, we have that $f^{\rm pot}(m,\widehat m)$ attains its maximum in $m$ precisely when $\widehat m=\beta(Jm +h)$, which implies \eqref{supCond} using that $\widehat m_{\epsilon,n}^*(t)$ solves \eqref{diff-equ}. Therefore \eqref{sr_final} becomes
\begin{align*}
f_n^{\rm cw} 
&\ge \frac{1}{s_n}\int_{s_n}^{2s_n} d\epsilon\int_0^1dt \sup_{m\in [-1,1]} f^{\rm pot}(m,\widehat m_{\epsilon,n}^*(t)) + O(s_n)\ge {\adjustlimits\inf_{\widehat m\in{\cal A}}\sup_{m\in [-1,1]}} f^{\rm pot}(m,\widehat m)+ O(s_n)
\end{align*}
and thus $\liminf_{n\to\infty}f_{n}^{\rm cw}\ge {\adjustlimits\inf_{\widehat m\in{\cal A}}\sup_{m\in [-1,1]}} f^{\rm pot}(m,\widehat m)$. This ends the proof of Theorem \ref{thmCW}. 
\section{Alternative formulation for the Curie-Weiss model in a random field}\label{anotherpath_cw}
In this section we repeat the proof but directly obtain the simpler expression of remark \ref{simpleexpressionCW} for the free energy, 
with the variational formula involving a potential depending on the single parameter $m$ representing the magnetization. 
Moreover we consider this time random i.i.d. external local fields $h_i\sim P_h$ in order to show that this is easily 
implemented in our approach (this case could have been considered also with the previous method); this model has been 
first rigorously treated in \cite{PhysRevB.15.1519}. For convenience we consider $P_h$ to be supported on $[-S,S]$. Standard 
limiting arguments allow to extend the support to the whole real line as long as the first few moments exist. 

Looking at the derivation below, the reader might wonder why we took a seemingly more complicated path in the previous section by introducing a two-parameter potential depending on both $m$ and an ``effective field'' $\widehat m$. This is because the two different proofs are, as it will soon become clear, based on different types of convexity arguments, and in many problems of interest such as generalized linear estimation \cite{barbier2017phase} or non-symmetric tensor estimation \cite{2017arXiv170910368B}, only the interpolation based on a two-parameter potential seems to be effective in order to obtain a full proof of replica formulas (instead of single-sided bounds reachable using a single-parameter potential). The arguments are very similar than in the previous section and we will be brief.

The Hamiltonian with random external fields, free energy and potential are this time
\begin{align*}
{\cal H}(\boldsymbol{\sigma};\textbf{h}) &\equiv -\frac{J}{n} \sum_{i<j} \sigma_i\sigma_j - \sum_{i=1}^n h_i\sigma_i\,,\\
f_n^{\rm rcw} &\equiv - \frac{1}{\beta n} \mathbb{E}\ln\sum_{\boldsymbol{\sigma}\in\{-1,1\}^n} e^{-\beta {\cal H}(\boldsymbol{\sigma};\textbf{h})}\,,\\
\widetilde f^{\rm pot}(m) &\equiv \frac{Jm^2}{2}- \frac1\beta \int dP_h(h)\ln\big(2\,{\rm cosh}\,[\beta\{Jm+h\}]\big) \,,
\end{align*}
where $\mathbb{E}$ is the expectation w.r.t. the random external fields $\textbf{h}$.
\begin{theorem}\label{thmCW_2}
The thermodynamic limit of the free energy for the Curie-Weiss model with random external fields verifies
\begin{align*}
\lim_{n\to\infty} f_n^{\rm rcw} =\inf_{m\in [-1, 1]}\widetilde f^{\rm pot}(m)\,.
\end{align*}	
\end{theorem}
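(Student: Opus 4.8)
The plan is to rerun the adaptive interpolation of Section~\ref{sec:format}, but with a \emph{single} interpolation function which now plays the role of the magnetization (rather than of the conjugate field $\widehat m$ as in the two--parameter scheme). Fix $s_n=\tfrac12 n^{-\alpha}$ with $0<\alpha<1/2$, let $\epsilon\in[s_n,2s_n]$, let $m_\epsilon:[0,1]\to[-1,1]$ be a trial interpolation function and set $R_\epsilon(t)\equiv\epsilon+\int_0^t ds\, m_\epsilon(s)$. I would introduce the interpolating Hamiltonian
\begin{align*}
{\cal H}_{t,\epsilon}(\boldsymbol\sigma;\mathbf h)\equiv -(1-t)\frac Jn\sum_{i<j}\sigma_i\sigma_j - J\,R_\epsilon(t)\sum_{i=1}^n\sigma_i - \sum_{i=1}^n h_i\sigma_i\,,
\end{align*}
in which only the pair interaction is switched off as $t$ grows, together with the averaged interpolating free energy $f_{n,\epsilon}(t)\equiv-\frac1{\beta n}\mathbb E\ln{\cal Z}_{n,\epsilon}(t)$ and its Gibbs bracket $\langle-\rangle_{t,\epsilon}$. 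One checks that $f_{n,\epsilon}(0)=f_n^{\rm rcw}+O(s_n)$ (the perturbation amounts to the uniform shift $h_i\mapsto h_i+J\epsilon$, controlled via $|\mathbb E\langle M\rangle|\le1$), while at $t=1$ the spins decouple and $f_{n,\epsilon}(1)=g(R_\epsilon(1))=g\big(\int_0^1 dt\, m_\epsilon(t)\big)+O(s_n)$, where $g(m)\equiv-\frac1\beta\int dP_h(h)\ln(2\,{\rm cosh}[\beta(Jm+h)])$ is the concave function such that $\widetilde f^{\rm pot}(m)=\tfrac{Jm^2}{2}+g(m)$.

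Next, exactly as for \eqref{tder_ising}, a direct computation gives $\frac{df_{n,\epsilon}(t)}{dt}=\frac J2\mathbb E\langle M^2\rangle_{t,\epsilon}-J\,m_\epsilon(t)\,\mathbb E\langle M\rangle_{t,\epsilon}+O(1/n)$, and inserting it into $f_n^{\rm rcw}=f_{n,\epsilon}(1)-\int_0^1 dt\,\frac{df_{n,\epsilon}(t)}{dt}+O(s_n)$ produces the sum rule
\begin{align*}
f_n^{\rm rcw}=g\Big(\int_0^1 dt\, m_\epsilon(t)\Big)+\int_0^1 dt\,\Big\{-\frac J2\mathbb E\langle M^2\rangle_{t,\epsilon}+J\,m_\epsilon(t)\,\mathbb E\langle M\rangle_{t,\epsilon}\Big\}+O(s_n)\,,
\end{align*}
valid for any trial function $m_\epsilon$. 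For the upper bound I would choose $m_\epsilon(t)=m\in[-1,1]$ constant: using $\mathbb E\langle M^2\rangle_{t,\epsilon}\ge\mathbb E[\langle M\rangle_{t,\epsilon}^2]$ and the elementary pointwise bound $-\tfrac J2 x^2+Jmx\le\tfrac{Jm^2}{2}$, the bracket is at most $\tfrac{Jm^2}{2}$, whence $f_n^{\rm rcw}\le g(m)+\tfrac{Jm^2}{2}+O(s_n)=\widetilde f^{\rm pot}(m)+O(s_n)$ for every $m$, and thus $\limsup_n f_n^{\rm rcw}\le\inf_{m\in[-1,1]}\widetilde f^{\rm pot}(m)$. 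Notice that, unlike in Section~\ref{sec:format}, this direction needs no concentration at all — it is precisely the convexity argument that differs between the two proofs.

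For the lower bound I would make the adaptive choice $m_\epsilon(t)=\mathbb E\langle M\rangle_{t,\epsilon}$. Since the right-hand side depends on the path only through $R_\epsilon(t)$, this is a first order ODE $\dot R_\epsilon(t)=F_n(t,R_\epsilon(t))$ with initial condition $R_\epsilon(0)=\epsilon$, where $F_n(t,\cdot)$ is ${\cal C}^1$ with $\partial_r F_n=\beta Jn\,\mathbb E\langle(M-\langle M\rangle_{t,\epsilon})^2\rangle_{t,\epsilon}\in[0,\beta Jn]$; by Cauchy--Lipschitz it has a unique global solution $R_{\epsilon,n}^*(t)=\epsilon+\int_0^t ds\, m_{\epsilon,n}^*(s)$ on $[0,1]$, and by the Liouville formula (as in Section~\ref{sec:format}) $\partial_\epsilon R_{\epsilon,n}^*(t)=\exp(\int_0^t ds\,\partial_r F_n)\ge1$, so the flow is regular and the sum rule may be used with this choice. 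With it the bracket equals $\tfrac J2 (m_{\epsilon,n}^*(t))^2-\tfrac J2\,\mathbb E\langle(M-m_{\epsilon,n}^*(t))^2\rangle_{t,\epsilon}$; averaging $\epsilon$ over $[s_n,2s_n]$, the fluctuation term is $O(1/(ns_n))=O(s_n)$ by the concentration discussed below, and since $r\mapsto-\ln(2\,{\rm cosh}[\beta(Jr+h)])$ is concave, Jensen applied to the uniform average over $t$ gives $g(R_{\epsilon,n}^*(1))\ge\int_0^1 dt\, g(m_{\epsilon,n}^*(t))+O(s_n)$. Collecting terms yields $f_n^{\rm rcw}\ge\frac1{s_n}\int_{s_n}^{2s_n}d\epsilon\int_0^1 dt\,\widetilde f^{\rm pot}(m_{\epsilon,n}^*(t))+O(s_n)\ge\inf_{m\in[-1,1]}\widetilde f^{\rm pot}(m)+O(s_n)$, hence $\liminf_n f_n^{\rm rcw}\ge\inf_{m\in[-1,1]}\widetilde f^{\rm pot}(m)$, and together with the upper bound this proves the theorem.

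The hard part — and the only genuinely new ingredient compared with the deterministic-field case of Section~\ref{sec:format} — is the concentration estimate $\frac1{s_n}\int_{s_n}^{2s_n}d\epsilon\,\mathbb E\langle(M-\mathbb E\langle M\rangle_{t,\epsilon})^2\rangle_{t,\epsilon}=O(1/(ns_n))$, uniformly in $t\in[0,1]$, which now controls fluctuations of $M$ with respect to \emph{both} the thermal measure and the quenched fields $\mathbf h$. The thermal part splits off and is handled verbatim as in Section~\ref{sec:2.2}, through the second $R_\epsilon$-derivative of $f_{n,\epsilon}$ combined with the regularity of the flow established above; the remaining piece, the self-averaging of $\langle M\rangle_{t,\epsilon}$ over the i.i.d.\ bounded fields $h_i\sim P_h$, holds because the model is ferromagnetic (cf.\ \cite{2019arXiv190106521B}) and follows from standard concentration-of-measure estimates for the Lipschitz function $\mathbf h\mapsto f_{n,\epsilon}(t)$ of the $h_i$'s, the perturbation $\epsilon$ playing once more the role of an initial condition that keeps the analysis away from the isolated critical point where the magnetization fluctuates most.
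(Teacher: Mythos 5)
Your proposal follows essentially the same route as the paper: same interpolating Hamiltonian, same sum rule, constant trial function for the upper bound (which indeed needs no concentration), and the adaptive ODE choice $m_\epsilon(t)=\mathbb{E}\langle M\rangle_{t,\epsilon}$ with Liouville regularity and $\epsilon$-averaged concentration for the lower bound; your Jensen step on the concave $g$ is just a trivially equivalent rearrangement of the paper's ``non-negative variance term'' argument. Two points deserve flagging. First, your claimed concentration rate $O(1/(n s_n))$ accounts only for the \emph{thermal} fluctuations; once the quenched fluctuations $\mathbb{E}[(\langle M\rangle_{t,\epsilon}-\mathbb{E}\langle M\rangle_{t,\epsilon})^2]$ over $\mathbf{h}$ are included, the paper's estimate \eqref{magn_conc} gives the slower dominant rate $O\big(s_n^{-2/3}n^{-1/3}\big)$. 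This still vanishes for $0<\alpha<1/2$, so the theorem survives, but the rate you state is too optimistic. Second, you correctly identify that the quenched piece rests on concentration of $\mathbf{h}\mapsto F_{n,\epsilon}$, but the nontrivial transfer from free-energy concentration to concentration of its $R_\epsilon$-derivative $\langle M\rangle_{t,\epsilon}$ is not ``standard'' and is where the real work lies: the paper uses the $R_\epsilon$-concavity of both $F_{n,\epsilon}$ and $f_{n,\epsilon}$, the comparison Lemma~\ref{lemmaConvexity}, the regularity of the flow to change variables $\epsilon\to R_\epsilon$, and an optimization over a smoothing scale $\delta$. Attributing this to ferromagnetism somewhat misplaces the role of that property, which is what makes the \emph{thermal} concentration hold at all temperatures; the quenched argument is generic and is exactly the one reused for the Wigner spike model.
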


Set $R_\epsilon(t)\equiv \epsilon+\int_0^tds\,m_\epsilon(s)$ with $m_\epsilon$ taking values in $[-1,1]$. The interpolating Hamiltonian and free energy are this time
\begin{align*}
{\cal H}_{t,\epsilon}(\boldsymbol{\sigma};\textbf{h}) &\equiv  (1-t) \Big(-\frac{J}{n} \sum_{i<j} \sigma_i\sigma_j\Big) - JR_\epsilon(t)\sum_{i=1}^n\sigma_i -\sum_{i=1}^nh_i\sigma_i\,,\\
f_{n,\epsilon}(t) &\equiv  - \frac1{\beta n}\mathbb{E}\ln {\cal Z}_{n,\epsilon}(t;\textbf{h}) = - \frac1{\beta n}\mathbb{E}\ln \sum_{\boldsymbol{\sigma}\in\{-1,1\}^n} e^{-\beta {\cal H}_{t,\epsilon}(\boldsymbol{\sigma};\textbf{h})}\,.
\end{align*}
By similar computations as before the sum rule becomes in this case (here the trial function $m_\epsilon$ is unconstrained as we did not use the concentration yet):
\begin{align}
f_n^{\rm rcw}
&=  - \frac1{\beta}\int dP_h(h)\ln\Big(2 \,{\rm cosh}\Big[\beta\Big\{h+J\int_0^1dt\, m_\epsilon(t)\Big\}\Big]\Big) \nonumber\\
&\qquad\quad- \int_0^1 dt\Big\{ \frac{J}{2}\mathbb{E}\langle M^2 \rangle_{t,\epsilon}-J\,\mathbb{E}\langle M \rangle_{t,\epsilon}m_\epsilon(t)\Big\} + O(s_n)\label{14}
\end{align}
or equivalently
\begin{align}
f_n^{\rm rcw}&=  \widetilde{f}^{\rm pot}\Big(\int_0^1 dt\,m_\epsilon(t)\Big) + O(s_n)\nonumber\\
&\qquad\quad+\frac J2\Big\{\int_0^1 dt\,m_\epsilon(t)^2 -\Big(\int_0^1 dt\,m_\epsilon(t) \Big)^2\Big\}- \frac J2\int_0^1dt\, \mathbb{E}\big\langle (M-m_\epsilon(t))^2\big\rangle_{t,\epsilon} \label{sr_cw_2}
\end{align}
where $O(s_n)$ is uniform in $t$, $\epsilon$, $R_{\epsilon}$.

From there a first bound is straightforward. Set $m_\epsilon(t)=m^*\equiv {\rm argmin}_{m\in[-1,1]} \widetilde f^{\rm pot}(m)$. Then the ``variance'' in curly brackets $\{\cdots\}$ in \eqref{sr_cw_2} cancels, while the ``remainder'' term $-\frac J2\int_0^1dt\,\mathbb{E}\langle (M-m^*)^2\rangle_{t,\epsilon} \le 0$, and thus $\limsup_{n\to\infty} f_n^{\rm rcw} \le \inf_{m \in[-1,1]} \widetilde{f}^{\rm pot}(m)$. Note that this first bound did not require the concentration of the magnetization, nor the use of the degree of freedom allowed by the possible time-dependence of the interpolation function $m_\epsilon(t)$. These two ingredients are used now for the converse bound. 

We now set $R_\epsilon$ to be the unique solution $R^*_{\epsilon,n}$ of the ODE $m_\epsilon(t)=F_n(t,R_{\epsilon}(t))=\mathbb{E}\langle M \rangle_{t,\epsilon}$ with initial condition $R_{\epsilon}(0)=\epsilon$; this solution exists for all $t\in [0,1]$ by the Cauchy-Lipschitz theorem. With this choice we check that the partial derivative $\partial_R F_n$ appearing in the exponential in the Liouville formula $\partial_\epsilon  R_{\epsilon,n}^*(t)=\exp\{\int_0^t ds\, \partial_{R_\epsilon} F_n(s,R_{\epsilon,n}^*(s))\}$ equals $\beta J n\mathbb{E}\langle (M - \langle M\rangle_{t,\epsilon})^2\rangle_{t,\epsilon}\ge 0$. Thus the flow $\epsilon \mapsto R^*_{\epsilon,n}(t)$ of the ODE is regular. We can then average the sum rule \eqref{sr_cw_2} over a small interval $\epsilon\in[s_n,2s_n]$ in order to use, thanks to the regularity of the flow, the following concentration for the magnetization (see the appendices for the proof): 
\begin{align}\label{magn_conc}
\frac{1}{s_n}\int_{s_n}^{2s_n}d\epsilon \int_0^1dt\,  \mathbb{E}\big\langle (M - \mathbb{E}\langle M\rangle_{t, \epsilon} )^2\big\rangle_{t,\epsilon} \le \frac{C(S,J)}{s_n^{2/3}n^{1/3}}
\end{align}
for a positive constant $C(S,J)$ depending only on the support of $P_h$ and the coupling strength $J$. This allows to simplify the sum rule by cancelling the remainder (up to a vanishing correction). Only the non-negative variance term survives which leads directly to (choosing a sequence $s_n$ going to $0_+$ at an appropriate rate) $f_n^{\rm rcw} \ge\widetilde{f}^{\rm pot}(\int_0^1dt\,m_{\epsilon,n}^*(t))+O(s_n)$. This finally yields $\liminf_{n\to\infty} f_n^{\rm rcw} \ge \inf_{m \in[-1,1]} \widetilde{f}^{\rm pot}(m)$, and thus proves Theorem \ref{thmCW_2}.
\section{Rank-one matrix estimation, or Wigner spike model}\label{sec:matrix}
Consider the following matrix estimation model, also called Wigner spike model in the statistics literature, which serves as a simple model of low-rank information extraction from a noisy data matrix such as in principal component analysis. One has access to the symmetric matrix of observations $\bW=(W_{ij})_{i,j=1}^n$ obtained as
\begin{align}\label{WSM}
W_{ij}= \frac{1}{\sqrt{n}}X_iX_j +  Z_{ij}\,, \qquad 1\le i\le j\le n\,,
\end{align}
where the signal-vector to infer has i.i.d. components $X_i\sim P_0$ with $\mathbb{E}[X_i^2] = \rho$, and the Gaussian noise is i.i.d. 
$Z_{ij}\sim{\cal N}(0,1)$ for $i\leq j$ and symmetric $Z_{ij}=Z_{ji}$. In order to ease the proof, we consider a prior supported on a bounded interval $[-S,S]$. Then, technical limiting arguments as found in \cite{2016arXiv161103888L,barbier2017phase} allow, if desired, to extend the final result to unbounded support as long as the first few moments of $P_0$ exist.

We consider the problem in the ``high-dimensional'' setting
%relevant in most of the ``big-data'' related applications. 
where the total signal-to-noise ratio (SNR) per parameter: $\# \,{\rm observations} \cdot {\rm SNR_{obs} }/\# \,{\rm parameters\ to\ infer}$,
%
% \begin{align*}
% 	\frac{\# \,{\rm observations} \cdot {\rm SNR_{obs} }}{\# \,{\rm parameters\ to\ infer}} 
% \end{align*}
is an order one quantity, where $\rm SNR_{obs}$ denotes the SNR per observation. In the present case 
we have access to $n(n+1)/2$ independent observations and ${\rm SNR_{obs}}=\mathbb{E}[(X_1X_2)^2]/n = \rho^2/n$ for the $n(n-1)/2$ off-diagonal terms, ${\rm SNR_{obs}}=\mathbb{E}[X_1^4]/n$ for the diagonal ones.
%
% \begin{align}
% \frac{\mathbb{E}[(X_iX_j)^2]}{n} = \frac{\mathbb{E}_{{P}_0}[X^2]^2}{n} = \frac{\rho^2}{n}
% \end{align}
%
Therefore we check $$\frac{(n(n-1)/2)\cdot(\rho^2/n)+n\cdot(\mathbb{E}[X_1^4]/n)}{n} = \frac{\rho^2}{2}+O(1/n)= O(1)\,.$$ This explains the presense of the scaling $1/\sqrt{n}$ in the observation model \eqref{WSM}. Note that any other scaling would make the estimation task either trivial if the total SNR per parameter tends to infinity, or impossible if it tends to zero.

We suppose that we are in a Bayesian optimal setting where the prior $P_0$ as well as the noise distribution are known. The posterior, or Gibbs distribution, is of the form $dP(\bx | \bW ) \propto \exp\{-\frac12\sum_{i\le j}(W_{ij}-x_ix_i/\sqrt{n})^2\}\prod_{i}dP_0(x_i)$. It is convenient to re-express it in terms 
of the independent variables $\bX$, $\bZ$ instead of $\bW$. Replacing $W_{ij}$ by its epression \eqref{WSM}, expanding the square, and then simplifying all the $\bx$-independent terms with the 
normalization, it becomes
\begin{align*}
dP(\bx | \bW(\bX,\bZ) ) = \frac{1}{{\cal Z}({\bf X}, {\bf Z})}\prod_{i=1}^n dP_0(x_i)\,e^{-{\cal H}({\bf x} ; {\bf X}, {\bf Z} )}\,,
\end{align*}
where the Hamiltonian and partition function are
\begin{align*}
{\cal H}({\bf x} ; {\bf X}, {\bf Z} )&\equiv\sum_{i\le j}^n \Big(\frac{x_i^2x_j^2}{2n}-\frac{x_ix_jX_iX_j}{n}-\frac{x_ix_jZ_{ij}}{\sqrt{n}}\Big)\,,\\
{\cal Z}_n(\bX, \bZ) &\equiv \int \prod_{i=1}^n dP_0(x_i) \,e^{-{\cal H}(\bx ; \bX,\textbf{Z})}\,.
\end{align*}
% This Hamiltonian is obtained from ${\cal H}({\bf x} ; {\bf W})$ replacing $\bW$ by its expression \eqref{WSM}, expanding the square and simplifying all the $\bx$-independent terms with the posterior normalization (which obviously leaves the posterior invariant).
% %
% \begin{align*}
% {\cal H}({\bf x} ; {\bf W} )&\equiv\frac{1}{2}\sum_{i\le j}^n \Big(W_{ij} - \frac{x_i x_j}{\sqrt n}\Big)^2\,,\\
% {\cal Z}(\bW) & \equiv \int  e^{-{\cal H}(\bx ;\textbf{W})}\prod_{i=1}^n dP_0(x_i) \,.
% \end{align*}
The free energy of this model is then defined as 
\begin{align*}
  f_n^{\rm ws} \equiv - \frac{1}{n}\mathbb{E}\ln {\cal Z}_n(\bX, \bZ) \,.
\end{align*}
Here $\mathbb{E}$ always denotes the expectation w.r.t. all (quenched) random variables in the ensuing expression (here $\bX$ and $\bZ$). This quantity is directly related to the mutual information between the observations and the input signal through the simple relation 
$$\frac{1}{n}I(\bX ; \bW) = f_n^{\rm ws} + \frac{\rho^2}{4}+O(1/n)$$ and variational expressions for this quantity are thus of fundamental interest. We will show that
such expressions can be rigorously determined using the adaptive interpolation method. 

The Hamiltonian ${\cal H}$ is nothing else than that of the so-called planted Sherrington-Kirkpatrick spin glass if 
one has a binary signal $X_i=\pm 1$ with Bernoulli $1/2$
prior; in this case the $\bx$-integral becomes a sum over spin configurations $\bx\in \{-1,1\}^n$. We shall see that, because 
this spin glass model stems from a Bayesian optimal setting, the {\it replica symmetric formula} for the free energy is exact. Let the replica symmetric potential be
\begin{align}
f^{\rm pot}(q,r)&\equiv \frac{q\,r}{2} - \frac{q^2}{4} - \mathbb{E} \ln \int d{P}_0(x) e^{-\big(r\,\frac{x^2}{2}-r\, x X -\sqrt{r}\,x Z\big)} \label{26}
\end{align}
with $X\sim P_0$, $Z\sim{\cal N}(0,1)$ and $(q,r)\in[0,\rho]^2$. This potential verifies at its stationary point(s)
\begin{align}\label{boundaries_WS}
\begin{cases}
\partial_{q} f^{\rm pot}(q,r) = 0 \quad \Leftrightarrow \quad r = q\,,\\
\partial_{r} f^{\rm pot}(q,r) = 0 \quad\Leftrightarrow\quad  q = \rho-{\rm mmse}(X|\sqrt{r}\, X + Z)\,,
\end{cases}
\end{align}
where, by definition, the minimum mean-square error (MMSE) function of a scalar r.v. $X\sim P_0$ observed through a Gaussian channel $Y= \sqrt{r}\, X + Z$, $Z\sim {\cal N}(0,1)$, with SNR $r$, is
\begin{align}
{\rm mmse}(X|\sqrt{r}\,X + Z) &\equiv \mathbb{E}_{X,Z}\big[(X-\mathbb{E}[X|\sqrt{r}\, X + Z])^2\big]\nonumber\\
 &=\mathbb{E}\Big[\Big(X-\frac{\int dP_0(x)\,x\, e^{-\big(r\,\frac{x^2}{2}-r\, x X -\sqrt{r}\,x Z\big)}}{\int dP_0(x) e^{-\big(r\,\frac{x^2}{2}-r\, x X -\sqrt{r}\,x Z\big)}}\Big)^2\Big]	\,.\label{scalarMMSE}
\end{align}
For completeness we prove by a direct computation the second identity of \eqref{boundaries_WS}, which also directly follows from the so-called ``I-MMSE'' theorem for the Gaussian channel \cite{GuoShamaiVerdu_IMMSE}, in the appendices.

We will prove the following theorem, already proved using the adaptive interpolation 
method in \cite{BarbierM17a} (formulated in a more technical discrete time setting)\footnote{The proof in reference \cite{BarbierM17a} is based on a ``discrete version'' of the adaptive interpolation method, in which the interpolation path is discretized and two different ``time parameters'' are used. The simpler (yet equivalent) continuous version presented in the present paper, as introduced in \cite{2017arXiv170910368B,barbier2017phase}, uses a single time parameter and is more straightforward for models defined on dense factor graphs (while the discrete version seems, at the moment, more adapted for sparse graphical models \cite{eric-censor-block}).}. 
Note that the theorem was also already proved in \cite{koradamacris} for a binary Bernoulli signal 
and also more recently using different (and more involved) techniques in \cite{XXT,2016arXiv161103888L,2018arXiv180101593E}. 

\begin{theorem}\label{thm:ws}
The thermodynamic limit of the free energy for the Wigner spike model verifies
\begin{align*}
\lim_{n\to\infty} f_n^{\rm ws} = {\adjustlimits \inf_{r\in [0,\rho]} \sup_{q\in [0,\rho]}} f^{\rm pot}(q,r)\,.
\end{align*}	
\end{theorem}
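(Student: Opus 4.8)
The plan is to follow exactly the four-step roadmap of the adaptive interpolation method that was just illustrated on the Curie-Weiss model, but now with the order parameter being the overlap $q$ between a replica $\mathbf{x}$ drawn from the posterior and the planted signal $\mathbf{X}$, namely $Q(\mathbf{x},\mathbf{X}) \equiv \frac1n\sum_{i=1}^n x_i X_i$. First I would introduce a perturbation parameter $\epsilon\in[s_n,2s_n]$ with $s_n\to 0_+$ polynomially, a trial interpolation function $r_\epsilon:[0,1]\to[0,\rho]$, and set $R_\epsilon(t)\equiv\epsilon+\int_0^t ds\, r_\epsilon(s)$. The interpolating Hamiltonian couples the original rank-one term with weight $(1-t)$ and adds a scalar ``effective Gaussian channel'' of SNR $R_\epsilon(t)$ acting on each component, i.e. a term of the form $\sum_i (R_\epsilon(t)\frac{x_i^2}{2} - R_\epsilon(t) x_i X_i - \sqrt{R_\epsilon(t)}\, x_i \widetilde Z_i)$ with fresh i.i.d. $\widetilde Z_i\sim\mathcal N(0,1)$; this interpolates between the Wigner spike model at $t=0$ and a fully decoupled scalar denoising problem at $t=1$ whose free energy is exactly the $\mathbb{E}\ln\int dP_0$ piece of $f^{\rm pot}$ in \eqref{26}.

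Second I would compute $df_{n,\epsilon}(t)/dt$. The $t$-derivative of the Gibbs average of $d\mathcal{H}_{t,\epsilon}/dt$ produces, after Gaussian integration by parts on the $Z_{ij}$ and $\widetilde Z_i$ and use of the Nishimori identities (appendix A1), a clean expression of the form $\frac14\mathbb{E}\langle Q^2\rangle_{t,\epsilon} - \frac12 r_\epsilon(t)\,\mathbb{E}\langle Q\rangle_{t,\epsilon}$ up to $O(1/n)$ and boundary terms controlled by $O(s_n)$ (the $\epsilon$-shift in the first boundary value is handled exactly as for Curie-Weiss, via a bound on the relevant derivative). Plugging this into $f_n^{\rm ws} = f_{n,\epsilon}(1) - \int_0^1 dt\, f_{n,\epsilon}'(t) + O(s_n)$ and completing the square $\frac14\langle Q^2\rangle - \frac{r}{2}\langle Q\rangle$ against the decoupled boundary term yields the sum rule expressing $f_n^{\rm ws}$ as $\frac1{s_n}\int_{s_n}^{2s_n} d\epsilon \int_0^1 dt\, f^{\rm pot}(\mathbb{E}\langle Q\rangle_{t,\epsilon}, r_\epsilon(t))$ plus a remainder involving $\mathbb{E}\langle (Q - \mathbb{E}\langle Q\rangle_{t,\epsilon})^2\rangle_{t,\epsilon}$ and a Jensen-type correction.

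Third comes the concentration step: one must show $\frac1{s_n}\int_{s_n}^{2s_n}d\epsilon\int_0^1 dt\,\mathbb{E}\langle (Q-\mathbb{E}\langle Q\rangle_{t,\epsilon})^2\rangle_{t,\epsilon}\to 0$ at a controlled polynomial rate. As in the Curie-Weiss case this splits into thermal fluctuations — bounded via the second derivative of $f_{n,\epsilon}$ with respect to $R_\epsilon$ together with the regularity (Jacobian $\ge 1$) of the flow $\epsilon\mapsto R_\epsilon(t)$ — and fluctuations of $\langle Q\rangle_{t,\epsilon}$ with respect to the quenched disorder, which are controlled by concentration of the free energy (bounded-difference / Gaussian-Poincaré inequalities on the $\bX$ and $\bZ$). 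This is the main obstacle: unlike the Curie-Weiss model where the magnetization concentration was a one-line exact computation, here one genuinely needs the Nishimori identity (which makes $\mathbb{E}\langle Q\rangle = \mathbb{E}\langle \frac1n\sum x_i^{(1)}x_i^{(2)}\rangle$ and forbids replica-symmetry breaking in the Bayes-optimal setting) plus the perturbation $\epsilon$ to kill the contribution of isolated phase-transition points, and the quenched-disorder part requires a genuine concentration-of-measure argument. I would cite the appendix for the quantitative bound, analogous to \eqref{magn_conc}.

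Fourth and last, the matching bounds. For the upper bound, fix the constant trial function $r_\epsilon(t)=r\in[0,\rho]$, which makes the flow trivially regular; the sum rule then gives $f_n^{\rm ws}\le \sup_{q\in[0,\rho]} f^{\rm pot}(q,r)+O(s_n)$ for every $r$, hence $\limsup_n f_n^{\rm ws}\le \inf_r\sup_q f^{\rm pot}(q,r)$ — and here one uses that $q\mapsto f^{\rm pot}(q,r)$ is concave since $\partial_q^2 f^{\rm pot} = -\frac12 < 0$, exactly as $\partial_m^2 f^{\rm pot}=-J$ in the Curie-Weiss argument. For the lower bound, choose $r_\epsilon(t)$ to solve the ODE $r_\epsilon(t) = \mathbb{E}\langle Q\rangle_{t,\epsilon}$, equivalently $dR_\epsilon/dt = F_n(t,R_\epsilon(t))$ with $R_\epsilon(0)=\epsilon$; $F_n$ is $\mathcal C^1$ with bounded derivative in its second argument for finite $n$ (that derivative is, up to positive constants, $n\,\mathbb{E}\langle(Q-\mathbb{E}\langle Q\rangle)^2\rangle\ge 0$), so Cauchy-Lipschitz gives a unique global solution and the Liouville formula gives Jacobian $\ge 1$, i.e. regularity of the flow. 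With this adapted choice the first stationarity relation in \eqref{boundaries_WS} ($r=q$) makes $f^{\rm pot}(\mathbb{E}\langle Q\rangle_{t,\epsilon},r_\epsilon(t)) = \sup_{q}f^{\rm pot}(q,r_\epsilon(t))$ hold pointwise (using concavity in $q$ again), and after discarding the nonnegative remainder via the concentration bound one gets $f_n^{\rm ws}\ge \frac1{s_n}\int_{s_n}^{2s_n}d\epsilon\int_0^1 dt\,\sup_q f^{\rm pot}(q,r_\epsilon(t))+O(s_n)\ge \inf_r\sup_q f^{\rm pot}(q,r)+O(s_n)$, hence $\liminf_n f_n^{\rm ws}\ge\inf_r\sup_q f^{\rm pot}(q,r)$. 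Combining the two bounds proves Theorem~\ref{thm:ws}.
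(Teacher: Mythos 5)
Your proposal follows the paper's proof of Theorem~\ref{thm:ws} essentially step by step: same interpolating Hamiltonian with the adaptive scalar-channel SNR $R_\epsilon(t)$, same Nishimori/Gaussian-IBP computation of $df_{n,\epsilon}/dt$, same decomposition of overlap fluctuations into thermal and quenched parts controlled via $\partial^2_{R_\epsilon}f_{n,\epsilon}$ and free-energy concentration under the regularity (Jacobian $\ge 1$) of the flow, and the same matching bounds (constant $r_\epsilon$ for the upper bound, the ODE $r_\epsilon(t)=\mathbb{E}\langle Q\rangle_{t,\epsilon}$ with Cauchy--Lipschitz and Liouville for the lower bound, using concavity of the decoupled free energy in $r$ for Jensen and concavity of $f^{\rm pot}$ in $q$ together with the stationarity condition $r=q$). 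The only cosmetic imprecision is that the sum rule itself keeps $\int_0^1 dt\, r_\epsilon(t)$ inside the scalar log-partition term and only becomes an integral of $f^{\rm pot}(\mathbb{E}\langle Q\rangle_{t,\epsilon},r_\epsilon(t))$ after either the constant choice or the Jensen step, a point you do acknowledge; this does not affect the correctness of the argument.
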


We note two remarks that are similar to those made for the Curie-Weiss model.
\begin{remark}\label{altxxx}
Here the maximum over $q$ is attained at $q=r$ and one finds
\begin{align*}
\lim_{n\to\infty} f_n^{\rm ws} 
= \inf_{r\in [0,\rho]}  f^{\rm pot}(r, r)
\end{align*}
as is usually found in the literature. This one-parameter variational expression can also be obtained directly by a simpler formulation of the adaptive interpolation discussed in section \ref{alternativexx}. For more complicated models, however, the simpler formulation is not powerful enough.
\end{remark}
\begin{remark}
The two-parameter potential \eqref{26} equals $qr/2$ minus the convex function $q^2/4$ minus another convex function of $r$. Using this structural property
it is possible to show (see the appendix D of \cite{barbier2017phase}) 
$$
{\adjustlimits \inf_{r\in [0,\rho]} \sup_{q\in [0,\rho]}} f^{\rm pot}(q,r)={\adjustlimits \inf_{q\in [0,\rho]} \sup_{r\in [0,\rho]}} f^{\rm pot}(q,r)\, .
$$
The r.h.s. can be optimized over $r$ by inverting the second equation in \eqref{boundaries_WS}. There is a unique solution $r(q)$ since the MMSE function is monotone decreasing in $r$, which yields
$$
\lim_{n\to\infty} f_n^{\rm ws} = \inf_{q\in [0,\rho]}  f^{\rm pot}(q,r(q))\, .
$$
In contrast to the Curie-Weiss model for general priors $P_0$ we do not have an explicit analytic expression for $r(q)$ and $f^{\rm pot}(q,r(q))$.
\end{remark}

Such replica formulas have also been proven for (non-symmetric) low-rank matrix and tensor estimation (or factorization) to varying degrees of generality \cite{koradamacris,2017arXiv170108010L,2017arXiv170200473M,2017arXiv170910368B,2018arXiv180101593E}, or in random linear \cite{barbier_ieee_replicaCS,barbier_allerton_RLE,private} and generalized estimation and learning \cite{barbier2017phase,NIPS2018_7584,NIPS2018_7453}. The proof reviewed here by the adaptive interpolation method is one of the simplest and most generic.
\subsection{Adaptive interpolation}\label{sec:adapInterp_XX}
%As for the Curie-Weiss model we will prove the first equality in Theorem~\ref{thm:ws} which is more convenient. The second one follows directly (see the appendix D of \cite{barbier2017phase}).

Let $\epsilon \in[s_n, 2s_n]$, for some sequence $s_n\in(0,1/2]$ that tends to $0_+$ as $(1/2)n^{-\alpha}$ for $0<\alpha<1$ (as opposed to the Curie-Weiss model, here it is crucial that this perturbation $\epsilon >0$ as it plays the role of a SNR). Let $r_{\epsilon}: [0, 1] \mapsto [0,\rho]$ and set $R_\epsilon(t)\equiv \epsilon+\int_0^tds\,r_\epsilon(s)$.
Consider the following interpolating $t$-dependent estimation model, where $t\in[0,1]$, with accessible observations $(W_{ij}(t))_{i,j=1}^n$ and $(\widetilde W_i(t))_{i=1}^n$ obtained through
\begin{align*}
\begin{cases}
W_{ij}(t) =\sqrt{\frac{1-t}{n}}\, X_iX_j + Z_{ij}\,, \qquad &1\le i\le j\le n\,,\\
\widetilde{W}_i(t)  = \sqrt{R_{\epsilon}(t)}\,X_i + \widetilde Z_i\,, \qquad &1\le i\le n\,,
\end{cases} 
\end{align*}
with i.i.d. Gaussian noise $\widetilde Z_{i}\sim {\cal N}(0,1)$, and $Z_{ij}\sim {\cal N}(0,1)$ with $Z_{ij}=Z_{ji}$. The function $R_\epsilon$ thus plays the role of a SNR in a scalar (i.e., decoupled) denoising problem. The associated interpolating posterior written in Gibbs form is
\begin{align}\label{tpost}
dP_{t,\epsilon}(\bx|\bW(t;\bX, \bZ),\widetilde{\bW}(t;\bX, \widetilde\bZ))&=\frac{1}{\mathcal{Z}_{n,\epsilon}(t; \bX, \bZ, \bZt)}\prod_{i=1}^n dP_0(x_i)\, e^{-{\cal H}_{t,\epsilon}(\bx;\bX, \bZ, \bZt)} 
\end{align}
with normalization (i.e., partition function) $\mathcal{Z}_{n,\epsilon}(t; \bX, \bZ, \bZt)$ and interpolating Hamiltonian and free energy given by
\begin{align}
 {\cal H}_{t,\epsilon}(\bx;\bX, \bZ, \bZt)
&\equiv(1-t)\sum_{i\le j}^n \Big(\frac{x_i^2x_j^2}{2n}-\frac{x_ix_jX_iX_j}{n} -\frac{x_ix_jZ_{ij}}{\sqrt{n(1-t)}}\Big)\nonumber
\\ &\qquad\qquad+ R_\epsilon(t)\sum_{i=1}^n \Big(\frac{x_i^2}{2} - x_iX_i-\frac{x_i\widetilde{Z}_i}{\sqrt{R_\epsilon(t)}}\Big)\,,\label{Ht}\\
f_{n,\epsilon}(t)&\equiv-\frac{1}{n}\mathbb{E}\ln {\cal Z}_{n,\epsilon}(t; \bX, \bZ, \bZt)\, \label{fnt}.
\end{align}
The $t$-dependent Gibbs-bracket is defined as usual for functions $A(\bx)=A$
\begin{align*}
\langle A \rangle_{t,\epsilon} \equiv \int dP_{t,\epsilon}(\bx|\bW(t;\bX, \bZ),\widetilde{\bW}(t;\bX, \widetilde\bZ))\,A(\bx) \,.
\end{align*}
The interpolating free energy verifies the boundary conditions (here $X\sim P_0$, $Z\sim{\cal N}(0,1)$)
\begin{align}\label{bound2}
\begin{cases}
f_{n,\epsilon}(0)\hspace{-7pt} &= f_n^{\rm ws}+O(s_n)\,,\\
f_{n,\epsilon}(1) \hspace{-7pt}&=
-\mathbb{E}\ln\int dP_0(x)e^{-\big(R_{\epsilon}(t)\,\frac{x^2}{2}  -R_{\epsilon}(t)\, xX-\sqrt{R_{\epsilon}(t)}\,x Z\big)}\\
\hspace{-7pt}&=
-\mathbb{E}\ln\int dP_0(x)e^{-\big(\int_0^1 dt\,r_\epsilon(t)\,\frac{x^2}{2}  -\int_0^1 dt\,r_\epsilon(t)\, xX-\sqrt{\int_0^1dt\, r_\epsilon(t)}\,xZ\big)}+O(s_n)\,. 
\end{cases} 
\end{align}
To derive the first equality we use a computation that is 
almost identical to those found in the second appendix. Let us summarize it here. By the ``Nishimori'' identity\footnote{A direct consequence of the 
Bayes rule for conditional probabilities, see Lemma \ref{NishId} in the appendices.} and Gaussian integration by parts we get $|d f_{n, \epsilon}(0)/d\epsilon| = |\mathbb{E}\langle Q\rangle_{0,\epsilon}/2|\le \rho/2$, where $Q(\bx,\bX)=Q\equiv \frac{1}{n}\sum_{i=1}^nx_iX_i$ is the so-called overlap\footnote{The relevant order parameter of the problem.}. Again by the Nishimori identity $\mathbb{E}\langle Q\rangle_{t,\epsilon}\in[0,\rho]$. Therefore $\vert f_{n, \epsilon}(0) - f_{n,0}(0)\vert \le \rho s_n$ by the mean value theorem, and we have trivially that $f_{n,0}(0)=f_n^{\rm ws}$. The second equality of the second line is directly obtained by continuity.

The $t$-derivative of the free energy $f_{n,\epsilon}(t)$ can be computed and yields a formula analogous to 
\eqref{tder_ising}. The derivation here is a bit more involved. It requires again the use of Gaussian integration by parts w.r.t. $Z_{ij}$ and $\widetilde{Z}_i$ as well as the use of the Nishimori identity $\mathbb{E}[\langle x_i\rangle_{t,\epsilon} X_i] = 
\mathbb{E}[\langle x_i\rangle_{t,\epsilon}^2]$. Details for this computation are found in the appendices. One gets
\begin{align}
\frac{df_{n,\epsilon}(t)}{dt}= \frac{1}{4}\mathbb{E}\langle Q^2\rangle_{t,\epsilon} 
-  \frac{1}{2}\mathbb{E}\langle Q\rangle_{t,\epsilon} \,r_\epsilon(t) + O(1/n)\label{34}
\end{align}
where $O(1/n)$ depends only on $S$. Using the fundamental theorem of calculus and \eqref{bound2} we deduce the following fundamental sum rule:
\begin{align}\label{MF-sumrule}
f_n^{\rm ws} = -\mathbb{E}\ln\int dP_0(x)e^{-\big(\int_0^1 dt\,r_\epsilon(t)\,\frac{x^2}{2}-\int_0^1dt\, r_\epsilon(t)\, xX-\sqrt{\int_0^1dt\, r_\epsilon(t)}\,xZ\big)}\nonumber\\
\qquad\qquad\qquad   -  \frac{1}{4}\int_0^1dt \Big\{\mathbb{E}\langle Q^2\rangle_{t,\epsilon} -  2\,\mathbb{E}\langle Q\rangle_{t,\epsilon}  \,r_\epsilon(t)\Big\} + O(s_n)\,.
\end{align}
\subsection{Simplifying the sum rule: overlap concentration}
The ``perturbation'' $\epsilon$ forces the overlap to concentrate. Let us fix for concreteness $s_n=(1/2)n^{-1/16}$. We again say that the map $\epsilon\mapsto R_\epsilon(t)$ is regular if it is a ${\cal C}^1$ diffeomorphism whose Jacobian is greater or equal to one for all $t\in [0,1]$. Regularity implies
\begin{align}\label{over-concen}
\frac{1}{s_n}\int_{s_n}^{2s_n}d\epsilon \int_0^1dt\,  \mathbb{E}\big\langle (Q - \mathbb{E}\langle Q\rangle_{t, \epsilon} )^2\big\rangle_{t,\epsilon} \le \frac{C(S)}{n^{1/3}\,s_n^{4/3}} = \frac{C(S)2^{4/3}}{n^{1/4}}
\end{align}
for a positive constant $C(S)$ depending only on the support $S$ of the prior. As in the Curie-Weiss model, the perturbation cannot vanish too fast, which is enforced by the constraint $n^{1/3}\,s_n^{4/3}\to +\infty$ as $n\to +\infty$. 

%This concentration result is much more subtle than the corresponding one for the Curie-Weiss model. 
Let us stress that the result as well as its proof are very generic and apply to essentially any Bayesian (optimal) inference problems perturbed by terms of the present type. We refer 
to the appendices for the proof and give here just a few comments. 
We must control two types of fluctuations: the thermal ones 
$\mathbb{E}\langle (Q - \langle Q\rangle_{t, \epsilon})^2\rangle_{t,\epsilon}$ and the quenched ones $\mathbb{E}[(\langle Q\rangle_{t, \epsilon} - \mathbb{E}\langle Q\rangle_{t, \epsilon})^2]$ (only thermal fluctuations are present in the pure Curie-Weiss model as there is no disorder/quenched variables).
The thermal fluctuations are again controlled by relating them to the second derivative of the free energy. 
We stress that this link is here non trivial and relies on Nishimori identities that are a direct consequence of the Bayesian 
optimal setting. It is precisely this link that guarantees the absence of replica symmetry breaking and the associated lack of self-averaging of the overlap \cite{MezardParisi87b}. 
On the other hand, the quenched fluctuations are small as a consequence of the concentration of the free energy, which itself is a very general fact. This aspect of the proof can be viewed as an adaptation of the Ghirlanda-Guerra identities of spin glasses \cite{ghirlanda1998general} to Bayesian inference problems (see \cite{KoradaMacris_CDMA}). 

Using this concentration result, under the regularity assumption for the map $\epsilon\mapsto R_\epsilon(t)$, the sum rule \eqref{MF-sumrule} simplifies to
\begin{align}\label{MF-sumrule-simple}
f_n^{\rm ws} = \frac1{s_n}\int_{s_n}^{2s_n}d\epsilon\Big[-\mathbb{E}\ln\int dP_0(x)e^{-\big(\int_0^1 dt\,r_\epsilon(t)\,\frac{x^2}{2}-\int_0^1 dt\,r_\epsilon(t)\, xX-\sqrt{\int_0^1dt \,r_\epsilon(t)}\,xZ\big)}\nonumber\\
\qquad\qquad\qquad   -  \frac{1}{4}\int_0^1 dt \Big\{\mathbb{E}[\langle Q\rangle_{t,\epsilon}]^2 -  2\,\mathbb{E}\langle Q\rangle_{t,\epsilon}  \,r_\epsilon(t)\Big\}\Big]+ O(s_n)
\end{align}
with a $O(s_n)$ depending only on $S$ (in particular it is uniform in $t$, $R_\epsilon$, $\epsilon$). The matching bounds are then obtained similarly to the Curie-Weiss model as explained below.
\subsection{Matching bounds}
\subsubsection{Upper bound}
Fix $r_{\epsilon}(t)= r\in[0,\rho]$ constant. Identifying the potential \eqref{26}, the sum rule \eqref{MF-sumrule-simple} then becomes
\begin{align*}
f_{n}^{\rm ws} &= \frac{1}{s_n}\int_{s_n}^{2s_n}d\epsilon\int_0^1dt\, f^{\rm pot}(\mathbb{E}\langle Q \rangle_{t,\epsilon},r)+O(s_n)\le \sup_{q\in[0,\rho]} f^{\rm pot}(q,r)+O(s_n)
\end{align*}
and optimizing over $r$ we obtain the desired bound $\limsup_{n\to\infty}f_{n}^{\rm ws}\le {\adjustlimits\inf_{r\in[0,\rho]}\sup_{q\in[0,\rho]}} f^{\rm pot}(q,r)$.
\subsubsection{Lower bound}
At this stage we choose $r_\epsilon(t)$ to be the solution of
\begin{align}
r_\epsilon(t)=\mathbb{E}\langle Q\rangle_{t,\epsilon} \,.\label{39}
\end{align}
As before, setting $F_n(t,R_{\epsilon}(t))=\mathbb{E}\langle Q\rangle_{t,\epsilon}$, we recognize 
a first order ODE
\begin{align}\label{odexx}
\frac{d R_\epsilon(t)}{dt} = F_n(t,R_{\epsilon}(t))\quad\text{with initial condition}\quad R_\epsilon(0)=\epsilon\,.
\end{align}
As $F_n(t,R_{\epsilon}(t))$ is ${\cal C}^1$ with bounded derivative w.r.t. its second argument the Cauchy-Lipschitz theorem implies that \eqref{odexx} admits a unique global solution $R_{\epsilon,n}^*(t)=\epsilon+\int_0^t ds\,r_{\epsilon,n}^*(s)$, where $r_{\epsilon,n}^*:[0,1]\mapsto [0,\rho]$. The flow $\epsilon\mapsto R_{\epsilon,n}^*(t)$ satisfies $\partial_\epsilon  R_{\epsilon,n}^*(t)=\exp\{\int_0^t ds\, \partial_{R_\epsilon} F_n(s,R_{\epsilon,n}^*(s))\}$ by Liouville's formula. Using repeatedly the Nishimori identity of Lemma \ref{NishId} one obtains
\begin{align*}
\frac{\partial F_n}{\partial R_\epsilon}(t,R_{\epsilon}(t))=\frac{1}{n}\sum_{i,j=1}^n\mathbb{E}\big[(\langle x_ix_j\rangle_{t,\epsilon}-\langle x_i\rangle_{t,\epsilon}\langle x_j\rangle_{t,\epsilon})^2\big]\ge 0
\end{align*}
so that the flow has a Jacobian $\ge 1$ and is a diffeomophism. Thus it is regular. This computation does not present any difficulty and can be found in section 6 of \cite{BarbierM17a}. Actually, even without doing this computation, one can directly assert that this derivative is non-negative using a simple information theoretic argument: the overlap, that quantifies the quality of the estimation, cannot decrease when the SNR (here $R_\epsilon$) increases, or equivalently the MMSE
\begin{align*}
{\rm MMSE}_{t,\epsilon} \equiv \min_{\hat \bx}\,\mathbb{E}\big[\|\bX - \hat \bx(\bW(t),\widetilde \bW(t))\|^2\big]	=\mathbb{E}\big[\|\bX - \langle \bx\rangle_{t,\epsilon}\|^2\big] = \rho - \mathbb{E}\langle Q\rangle_{t,\epsilon}
\end{align*}
cannot increase with the SNR. 

Explicit differentiation shows that $r\mapsto-\mathbb{E}\ln\int dP_0(x)\exp\{-(r\frac{x^2}{2}-rxX-\sqrt{r}\,xZ)\}$ is concave (see e.g., \cite{GuoShamaiVerdu}), so applying Jensen's inequality to the sum rule \eqref{MF-sumrule-simple} yields
\begin{align}\label{MF-sumrule_bound}
  f_{n}^{\rm ws} \ge \frac1{s_n}\int_{s_n}^{2s_n}d\epsilon \int_0^1 dt\,f^{\rm pot}(\mathbb{E}\langle Q\rangle_{t,\epsilon},r_{\epsilon,n}^*(t)) +O(s_n)\,.
\end{align}
Now note that
\begin{align}\label{remarkthat}
f^{\rm pot}(\mathbb{E}\langle Q\rangle_{t,\epsilon},r_{\epsilon,n}^*(t))=\sup_{q\in[0,\rho]}	f^{\rm pot}(q,r_{\epsilon,n}^*(t))\,.
\end{align}
This follows from the fact that $q\mapsto f^{\rm pot}(q,r)$ is concave (with second derivative equal to $-1/2$). Then because of \eqref{boundaries_WS} this map attains its maximum whenever $q=r$, which implies \eqref{remarkthat} because $r_{\epsilon,n}^*(t)$ verifies \eqref{39}. Thus \eqref{MF-sumrule_bound} gives
\begin{align*}
  f_{n}^{\rm ws} \ge \frac1{s_n}\int_{s_n}^{2s_n}d\epsilon \int_0^1dt  \sup_{q\in[0,\rho]}	f^{\rm pot}(q,r_{\epsilon,n}^*(t))+ O(s_n)\!\ge\! {\adjustlimits \inf_{r\in[0,\rho]} \sup_{q\in[0,\rho]}}	f^{\rm pot}(q,r) + O(s_n)
\end{align*}
and thus finally $\liminf_{n\to\infty}f_{n}^{\rm ws}\ge {\adjustlimits\inf_{r\in[0,\rho]}\sup_{q\in[0,\rho]}} f^{\rm pot}(q,r)$ which ends the proof of Theorem \ref{thm:ws}.
\section{Alternative formulation for matrix estimation}\label{alternativexx}
As for the Curie-Weiss model, we present for the sake of completeness a direct proof of the variational formula of remark \ref{altxxx} based on a single-parameter potential. Again, this route is simpler, but is less general and often not powerful enough. For example, in the non-symmetric version of the present problem, namely with observations of the type
\begin{align*}
W_{ij}= \frac{1}{\sqrt{n}}U_iV_j +  Z_{ij}\,, \qquad 1\le i\le j\le n\,,
\end{align*}
with $\textbf{U}$ and $\textbf{V}$ vectors being independently drawn from possibly different priors, the simpler path that we present now does not seem to generalize. See \cite{2017arXiv170910368B} for a treatment of this model (and its generalization to higher order tensors) by the adaptive interpolation method, or \cite{2017arXiv170200473M}.

We define this time the potential as \eqref{26} but where the stationary condition $r=q$ (recall \eqref{boundaries_WS}) is enforced:
\begin{align*}
\widetilde f^{\rm pot}(q)&\equiv f^{\rm pot}(q,q)= \frac{q^2}{4} - \mathbb{E} \ln \int d{P}_0(x) e^{-\big(q\frac{x^2}{2}-q x X -\sqrt{q}\,x Z\big)}\,.  
\end{align*}
\begin{theorem}\label{thm:ws_2}
The thermodynamic limit of the free energy for the Wigner spike model verifies
\begin{align*}
\lim_{n\to\infty} f_n^{\rm ws} =\inf_{q\in [0,\rho]}  \widetilde f^{\rm pot}(q)\,.
\end{align*}	
\end{theorem}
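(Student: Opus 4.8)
The plan is to mimic the alternative single-parameter proof given for the random-field Curie-Weiss model in Section~\ref{anotherpath_cw}, but adapted to the Bayesian inference setting. First I would set up an interpolating estimation model in which the ``decoupled'' side at $t=1$ is a scalar denoising channel with SNR $R_\epsilon(t)=\epsilon+\int_0^t ds\, q_\epsilon(s)$, where $q_\epsilon:[0,1]\to[0,\rho]$ is an unconstrained trial function playing the role of the magnetization in the random-field case. Concretely, keep the observations $W_{ij}(t)=\sqrt{(1-t)/n}\,X_iX_j+Z_{ij}$ together with $\widetilde W_i(t)=\sqrt{R_\epsilon(t)}\,X_i+\widetilde Z_i$, but now the key structural difference is that $R_\epsilon$ enters the sum rule through the \emph{self-overlap} structure; mirroring \eqref{sr_cw_2}, after computing $df_{n,\epsilon}/dt$ via Gaussian integration by parts and the Nishimori identity as in \eqref{34}, I expect a sum rule of the form
\begin{align*}
f_n^{\rm ws} = \widetilde f^{\rm pot}\Big(\int_0^1 dt\, q_\epsilon(t)\Big) + O(s_n)
- \frac14\Big\{\int_0^1 dt\, q_\epsilon(t)^2 - \Big(\int_0^1 dt\, q_\epsilon(t)\Big)^2\Big\}
+ \frac14 \int_0^1 dt\,\mathbb{E}\big\langle (Q - q_\epsilon(t))^2\big\rangle_{t,\epsilon}\,,
\end{align*}
where the ``variance'' term and the ``remainder'' term have opposite signs to the Curie-Weiss case because the coefficient of $q^2$ in $\widetilde f^{\rm pot}$ is $+1/4$ rather than $-J/2$; this flips which bound is easy.

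Next I would extract the two bounds. For the \textbf{lower bound} (the easy one here), set $q_\epsilon(t)\equiv q^*\equiv \operatorname*{argmin}_{q\in[0,\rho]}\widetilde f^{\rm pot}(q)$ constant: the variance term vanishes and the remainder $\frac14\int_0^1 dt\,\mathbb{E}\langle(Q-q^*)^2\rangle_{t,\epsilon}\ge 0$, giving $\liminf_{n\to\infty} f_n^{\rm ws}\ge \inf_{q\in[0,\rho]}\widetilde f^{\rm pot}(q)$ with no need for concentration or the adaptive choice. For the \textbf{upper bound}, choose $q_\epsilon(t)$ as the solution of the ODE $q_\epsilon(t)=F_n(t,R_\epsilon(t))=\mathbb{E}\langle Q\rangle_{t,\epsilon}$ with initial condition $R_\epsilon(0)=\epsilon$; Cauchy--Lipschitz gives a unique global solution $R^*_{\epsilon,n}$ and Liouville's formula together with $\partial_{R_\epsilon}F_n=\frac1n\sum_{i,j}\mathbb{E}[(\langle x_ix_j\rangle_{t,\epsilon}-\langle x_i\rangle_{t,\epsilon}\langle x_j\rangle_{t,\epsilon})^2]\ge 0$ (exactly as in Section~\ref{sec:matrix}) shows the flow is regular. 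Then average the sum rule over $\epsilon\in[s_n,2s_n]$ and invoke the overlap concentration \eqref{over-concen} to replace $\mathbb{E}\langle(Q-q_\epsilon(t))^2\rangle_{t,\epsilon}$ by $(\mathbb{E}\langle Q\rangle_{t,\epsilon}-q_\epsilon(t))^2=0$ up to $O(n^{-1/8})$, killing the remainder. What survives is $f_n^{\rm ws}=\widetilde f^{\rm pot}(\int_0^1 dt\,q^*_{\epsilon,n}(t)) - \frac14\{\int_0^1 q^{*2} - (\int_0^1 q^*)^2\} + o(1)\le \widetilde f^{\rm pot}(\int_0^1 dt\,q^*_{\epsilon,n}(t)) + o(1)\le \sup\ldots$—wait, here one uses that the Jensen/variance term has the right sign: $\int_0^1 q^{*2}\ge (\int_0^1 q^*)^2$, so $-\frac14\{\cdots\}\le 0$, giving $f_n^{\rm ws}\le \widetilde f^{\rm pot}(\int_0^1 dt\, q^*_{\epsilon,n}(t)) + o(1)\le \sup_{q\in[0,\rho]}\widetilde f^{\rm pot}(q)$—no, that is the wrong direction. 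The correct route, as in the random-field CW argument, is that after concentration only the \emph{non-negative} variance survives and bounds things from below; so for the matching \emph{upper} bound one rather observes that along the adaptive solution $\int_0^1 q^{*2} \ge (\int_0^1 q^*)^2$ makes $f_n^{\rm ws}\le \widetilde f^{\rm pot}(\int_0^1 q^*_{\epsilon,n}) \le \sup_q \widetilde f^{\rm pot}(q)$ fails, so instead one uses $\widetilde f^{\rm pot}(\int_0^1 q^*_{\epsilon,n}(t))\le \inf_q\widetilde f^{\rm pot}(q)$ is what is needed and is false in general—hence the clean statement is: the adaptive choice forces $\int_0^1 q^*_{\epsilon,n}(t)dt$ to be close to a \emph{fixed point}, and concavity arguments analogous to \eqref{remarkthat} then pin it to the minimizer. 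In the write-up I would simply follow the structure of Section~\ref{anotherpath_cw} verbatim with $J\to 1/2$-type sign bookkeeping and $\mathbb{E}\langle\cdot\rangle$ in place of $\langle\cdot\rangle$, and appeal to Theorem~\ref{thm:ws} plus Remark~\ref{altxxx} to identify the two optimized expressions.

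The main obstacle I anticipate is the sign bookkeeping in the sum rule: because the quadratic term $q^2/4$ in $\widetilde f^{\rm pot}$ enters with the opposite sign compared to the $-Jm^2/2$ term in $\widetilde f^{\rm pot}$ of the Curie-Weiss model, the roles of ``easy bound'' and ``adaptive bound'' are swapped relative to Section~\ref{anotherpath_cw}, and one must be careful that the concavity of $r\mapsto -\mathbb{E}\ln\int dP_0(x)e^{-(rx^2/2-rxX-\sqrt r\,xZ)}$ (already used for the two-parameter proof) combines correctly with Jensen on the $t$-integral to yield a genuinely \emph{matching} bound rather than two bounds in the same direction. A secondary subtlety is verifying that the explicit $t$-derivative computation produces exactly the remainder/variance decomposition written above; this is the analogue of passing from \eqref{14} to \eqref{sr_cw_2} and is purely algebraic, relying on $\mathbb{E}\langle Q^2\rangle = \mathbb{E}[\langle Q\rangle^2] + \mathbb{E}\langle(Q-\langle Q\rangle)^2\rangle$ and the completion of square $\mathbb{E}\langle(Q-q_\epsilon(t))^2\rangle = \mathbb{E}\langle Q^2\rangle - 2q_\epsilon(t)\mathbb{E}\langle Q\rangle + q_\epsilon(t)^2$. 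Everything else — the ODE well-posedness, the Liouville/regularity argument, and the overlap concentration \eqref{over-concen} — is imported wholesale from the preceding section.
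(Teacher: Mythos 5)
Your plan to mirror Section~\ref{anotherpath_cw} is the right one, and the paper's proof does exactly that, but your sum rule has a genuine sign error that propagates through the rest of the write-up and swaps which bound is ``easy''. You claim that the variance and remainder flip sign relative to Curie-Weiss ``because the coefficient of $q^2$ in $\widetilde f^{\rm pot}$ is $+1/4$ rather than $-J/2$''. That is a mix-up between the two-parameter potential $f^{\rm pot}(m,\widehat m)$ of \eqref{pot_2}, whose $m^2$ coefficient is indeed $-J/2$, and the one-parameter potential $\widetilde f^{\rm pot}(m) = \frac{Jm^2}{2} - \frac1\beta\int dP_h \ln(2\cosh[\beta\{Jm+h\}])$ used in Section~\ref{anotherpath_cw}, whose $m^2$ coefficient is $+J/2$. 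The quadratic coefficient in $\widetilde f^{\rm pot}$ is thus positive in \emph{both} cases ($+J/2$ for Curie-Weiss, $+1/4$ for the Wigner model), so nothing flips. Completing the square exactly as in the passage from \eqref{14} to \eqref{sr_cw_2} (your ``secondary subtlety'', which is indeed purely algebraic) gives the correct sum rule
\begin{align*}
f_n^{\rm ws} = \widetilde f^{\rm pot}\Big(\int_0^1 dt\,q_{\epsilon}(t)\Big)+\frac14\Big\{\int_0^1 dt\,q_\epsilon(t)^2 -\Big(\int_0^1dt\, q_\epsilon(t) \Big)^2\Big\}-  \frac{1}{4}\int_0^1dt\, \mathbb{E}\big\langle (Q-  q_\epsilon(t))^2\big\rangle_{t,\epsilon} + O(s_n)\,,
\end{align*}
i.e.\ $+\tfrac14\,(\text{variance})-\tfrac14\,(\text{remainder})$, with \emph{the same sign pattern} as \eqref{sr_cw_2}.

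Because of this, the roles of the two bounds are not swapped. With the correct sum rule, setting $q_\epsilon(t)\equiv q^*$ constant kills the variance term and leaves $f_n^{\rm ws} = \widetilde f^{\rm pot}(q^*) - \tfrac14\,(\text{remainder}) + O(s_n)\le \widetilde f^{\rm pot}(q^*) + O(s_n)$, which is the \emph{upper} bound, not the lower bound as you wrote. Conversely, the adaptive choice $q_\epsilon(t)=\mathbb{E}\langle Q\rangle_{t,\epsilon}$ together with overlap concentration \eqref{over-concen} kills the remainder and leaves the non-negative variance, giving the \emph{lower} bound $f_n^{\rm ws}\ge \widetilde f^{\rm pot}(\int_0^1 dt\,q^*_{\epsilon,n}(t)) + o(1) \ge \inf_q\widetilde f^{\rm pot}(q)+o(1)$. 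This is precisely what you were groping for in the tangled middle paragraph (``only the non-negative variance survives and bounds things from below''), but you diagnosed the contradiction as a deep issue about fixed points and concavity when it was simply the sign error at the start. The ODE well-posedness, Liouville regularity, and concentration steps you import are all correct and unchanged; you do not need to appeal to Theorem~\ref{thm:ws} or Remark~\ref{altxxx}, since the single-parameter argument is self-contained once the sum rule has the right signs.
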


Set $R_\epsilon(t)\equiv \epsilon+\int_0^tds\, q_\epsilon(s)$ with $q_\epsilon$ taking values in $[0,\rho]$ and the Hamiltonian ${\cal H}_{t,\epsilon}$ given by \eqref{Ht} remains unchanged. Then, obviously, the sum rule \eqref{MF-sumrule} stays the same too (simply renaming $r_\epsilon$ by $q_\epsilon$). It can be equivalently re-expressed as
\begin{align*}
f_n^{\rm ws} = &\ \widetilde f^{\rm pot}\Big(\int_0^1 dt\,q_{\epsilon}(t)\Big)+\frac14\Big\{\int_0^1 dt\,q_\epsilon(t)^2 -\Big(\int_0^1dt\, q_\epsilon(t) \Big)^2\Big\}\nonumber\\
&\qquad  -  \frac{1}{4}\int_0^1dt\, \mathbb{E}\big\langle (Q-  q_\epsilon(t))^2\big\rangle_{t,\epsilon} + O(s_n)\,.
\end{align*}
From there the steps follow exactly the ones presented in section \ref{anotherpath_cw} for the Curie-Weiss model. The upper bound is obtained by setting  $q_\epsilon(t)=q^*\equiv {\rm argmin}_{q\in[0,\rho]} \widetilde f^{\rm pot}(q)$ which cancels the variance term in curly brackets $\{\cdots\}$, and then using the non-positivity of the remainder. The lower bound is obtained from the choice of the interpolation function as solution of the ODE $q_\epsilon(t)=\mathbb{E}\langle Q\rangle_{t,\epsilon}$ (checking carefully that its flow is regular). Then one has to average the sum rule over $\epsilon\in[s_n,2s_n]$ in order to use the overlap concentration. Finally, using the non-negativity of the variance term ends the argument. This ends the proof of Theorem \ref{thm:ws_2}.
\section{Conclusion and perspectives}
We have presented a proof technique allowing to rigorously derive replica symmetric formulas for the free energy of statistical mechanics and Bayesian inference models. We focused on the simplest possible ones for pedagogic purpose. 
Our approach appears to be much more compact and straightforward than other existing techniques
when looking at the variety of problems successfully treated with it. In addition it only requires, in order to be applicable, what is believed to be from a physical point a view the minimal property for replica {\it symmetric} formulas to be valid: concentration of the order parameter of the problem (the magnetization in Curie-Weiss, or more generally the Edwards-Anderson overlap). Nevertheless the adaptive interpolation method is, at the moment, restricted to simple physics models, or Bayesian {\it optimal} inference and learning problems on {\it dense} graphs, for which such concentration can be proven in the whole phase diagram. Extending this technique to more complicated models away from the Nishimori line (i.e., in the non-optimal Bayesian setting of inference problems when the posterior is not exactly known), to problems where the self-averaging of the overlap does not occur and replica symmetry breaking takes place, and also to sparse graphical models, are exciting and challenging research directions.

Let us mention another interesting line of work initiated by Guerra \cite{Guerra_sumRUles} exploiting interpolation methods based on Hamilton-Jacobi partial differential equations satisfied by the free energy \cite{barra2013mean,doi:10.1063/1.3131687,1742-5468-2010-09-P09006,Barra2008} (and more recently \cite{mourrat2018hamilton} which treats the Wigner spike model). We believe that there exist profound links between this approach and ours, although their practical implementation is quite different. Elucidating these connections seems an interesting research direction.
\begin{acknowledgements}
This work was developed while Jean Barbier was affiliated with EPFL. Support from Swiss National Foundation grant no 200021-156672 is acknowledged.
It is a pleasure to thank our collaborators with whom we further developed the method for various applications in other problems, in particular Chun-Lam Chan, Florent Krzakala,  Cl\'ement Luneau, Antoine Maillard, L\'eo Miolane and Lenka Zdeborov\'a.
\end{acknowledgements}
%
% \newpage
\section*{Appendices} 

\subsection*{(A1) The Nishimori identity}\label{app:nishimori}
\begin{lemma}\label{NishId}
Let $(\bX,\bY)$ be a couple of random variables with joint distribution $P(\bX, \bY)$ and conditional distribution 
$P(\bX | \bY)$. Let $k \geq 1$ and let $\bx^{(1)}, \dots, \bx^{(k)}$ be i.i.d.\ samples from the conditional distribution. Let us denote $\langle - \rangle$ the expectation operator w.r.t. the conditional distribution and $\mathbb{E}$ the expectation w.r.t. the joint distribution. Then, for all continuous bounded function $g$ we have\footnote{This identity has been abusively called ``Nishimori identity'' in the statistical physics literature despite that
it is a simple consequence of Bayes formula. The 
``true'' Nishimori identity concerns models with one extra feature, namely a 
gauge symmetry which allows to eliminate the input signal, and the expectation over the signal $\bX$ in expressions of the form $\mathbb{E}\langle -\rangle$
can therefore be dropped (see e.g., \cite{koradamacris}).} 
\begin{align*}
\mathbb{E} \big\langle g(\bY,\bx^{(1)}, \dots, \bx^{(k)}) \big\rangle
=
\mathbb{E} \big\langle g(\bY, \bX, \bx^{(2)}, \dots, \bx^{(k)}) \big\rangle\,. 
\end{align*}	
\end{lemma}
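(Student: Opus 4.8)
The plan is to unwind both sides into explicit integrals against the relevant densities and observe that the two expressions are literally equal once one recognizes that under $\mathbb{E}\langle-\rangle$ the ``planted'' signal $\bX$ and a fresh replica $\bx^{(1)}$ drawn from the posterior $P(\bx\mid\bY)$ play symmetric roles. Concretely, I would first write, for the left-hand side,
\begin{align*}
\mathbb{E}\big\langle g(\bY,\bx^{(1)},\dots,\bx^{(k)})\big\rangle
=\int dP(\bY)\int \Big(\prod_{\ell=1}^k dP(\bx^{(\ell)}\mid\bY)\Big)\, g(\bY,\bx^{(1)},\dots,\bx^{(k)}),
\end{align*}
where I have used that $\mathbb{E}$ over the joint law $P(\bX,\bY)$ of $g$ (which does not depend on $\bX$) reduces to the marginal $\int dP(\bY)$, and that the replicas are i.i.d.\ from $P(\cdot\mid\bY)$.

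Next I would treat the right-hand side the same way: there $g$ depends on $\bX$ and on $\bx^{(2)},\dots,\bx^{(k)}$, so
\begin{align*}
\mathbb{E}\big\langle g(\bY,\bX,\bx^{(2)},\dots,\bx^{(k)})\big\rangle
=\int dP(\bX,\bY)\int \Big(\prod_{\ell=2}^k dP(\bx^{(\ell)}\mid\bY)\Big)\, g(\bY,\bX,\bx^{(2)},\dots,\bx^{(k)}).
\end{align*}
The key step is to factor the joint law as $dP(\bX,\bY)=dP(\bY)\,dP(\bX\mid\bY)$ by Bayes' rule, and then rename the integration variable $\bX$ as $\bx^{(1)}$: the factor $dP(\bX\mid\bY)$ becomes exactly the missing $\ell=1$ term $dP(\bx^{(1)}\mid\bY)$ in the product, so the right-hand side integral becomes term-by-term identical to the left-hand side integral. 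Boundedness and continuity of $g$ guarantee all integrals are finite and the manipulations are legitimate (in particular Fubini applies), which is the only ``analytic'' content.

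I do not expect a genuine obstacle here; the statement is essentially a reformulation of Bayes' rule plus the i.i.d.\ replica structure, and the only thing to be careful about is bookkeeping — making sure the exchange of $\bX\leftrightarrow\bx^{(1)}$ is done at the level of the full integrand including all the conditional factors, and noting that the identity extends from bounded continuous $g$ to, say, polynomially bounded $g$ by standard truncation/approximation arguments whenever needed in the applications (overlaps, etc.). If anything merits a remark it is the footnote's caveat: this is the ``Bayes-form'' identity, valid with the $\mathbb{E}$ over $\bX$ present, and should not be confused with the gauge-symmetric version where that expectation can be dropped.
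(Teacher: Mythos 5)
Your proof is correct and follows essentially the same route as the paper: the paper phrases it as the observation that $(\bY,\bx^{(1)},\dots,\bx^{(k)})$ and $(\bY,\bX,\bx^{(2)},\dots,\bx^{(k)})$ have the same law because sampling $(\bX,\bY)$ jointly is the same as sampling $\bY$ then $\bX\mid\bY$, while you make the identical point by writing out the integrals, factoring $dP(\bX,\bY)=dP(\bY)\,dP(\bX\mid\bY)$, and relabeling $\bX$ as $\bx^{(1)}$. Same argument, just more explicit bookkeeping.
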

\begin{proof}
This is a simple consequence of Bayes formula.
It is equivalent to sample the couple $(\bX,\bY)$ according to its joint distribution or to sample first $\bY$ according to its marginal distribution and then to sample $\bx$ conditionally on $\bY$ from the conditional distribution. Thus the two $(k+1)$-tuples $(\bY,\bx^{(1)}, \dots,\bx^{(k)})$ and  $(\bY, \bX, \bx^{(2)},\dots,\bx^{(k)})$ have the same law.	
\end{proof}
\subsection*{(A2) Concentration of the magnetization: proof of inequality \eqref{magn_conc}}\label{appendix-magn}
We prove the concentration property \eqref{magn_conc} under the assumption that the map $\epsilon\in [s_n, 2s_n] \mapsto R_\epsilon(t)\in [R_{s_n}(t), R_{2s_n}(t)]$ is regular. Recall regularity here means that the map is a $\mathcal{C}^1$ diffeomorphism 
 with Jacobian $\partial_\epsilon R_\epsilon(t) \geq 1$. We do not repeat this assumption in the statements below.

To control the total fluctuations of the magnetization we use the decomposition
\begin{align*}
\mathbb{E}\big\langle (M - \mathbb{E}\langle {M}\rangle_{t,\epsilon})^2\big\rangle_{t,\epsilon}
& = 
\mathbb{E}\big\langle ({M} - \langle {M}\rangle_{t,\epsilon})^2\big\rangle_{t,\epsilon}
+ 
\mathbb{E}\big[(\langle {M}\rangle_{t,\epsilon} - \mathbb{E}\langle {M}\rangle_{t,\epsilon})^2\big]
\end{align*} 
where the first fluctuations are thermal while the second ones are fluctuations due to the quenched disorder. In this appendix the $t$-dependence of the free energy and of $R_{\epsilon}(t)$ does not play any role, so we drop it and simply denote $R_{\epsilon}\equiv R_{\epsilon}(t)$; the proof is valid at any $t\in[0,1]$. We emphasize the $R_\epsilon$-dependence of the interpolating free energy and shall denote
$f_{n,\epsilon}(t, R_\epsilon(t)) =f_{n,\epsilon}(R_\epsilon) \equiv -\frac{1}{\beta n}\mathbb{E}\ln {\cal Z}_{n,\epsilon}(t;\textbf{h})$ and $F_{n,\epsilon}(R_\epsilon) \equiv -\frac{1}{\beta n}\ln {\cal Z}_{n,\epsilon}(t;\textbf{h})$ for the averaged and  
non-averaged free energies. The derivatives of these free energies satisfy
\begin{align}
\frac{d f_{n,\epsilon}}{d R_\epsilon}=-J\,\mathbb{E}\langle M\rangle_{t,\epsilon}\,, 
\qquad \frac{d^2 f_{n,\epsilon}}{d R_\epsilon^2}  = - n\beta J^2 \,\mathbb{E}\big\langle (M - \langle M\rangle_{t, \epsilon})^2\big\rangle_{t,\epsilon}\label{der_averaged}
\end{align}
and the same relations, but {\it without} the expectation $\mathbb{E}$ over $\bf h$, hold for the derivatives of $F_{n,\epsilon}$ (e.g., the first relation becomes $d F_{n,\epsilon}/d R_\epsilon=-J\,\langle M\rangle_{t,\epsilon}$ and similarly for the second one).
% \begin{align}
% \frac{d F_{n,\epsilon}}{d R_\epsilon}=-J\,\langle M\rangle_{t,\epsilon}\,, 
% \qquad \frac{d^2 F_{n,\epsilon}}{d R_\epsilon^2}  = - n\beta J^2 \,\langle (M - \langle M\rangle_{t, \epsilon})^2\rangle_{t,\epsilon}\,.\label{der_non_averaged}
% \end{align}

The thermal fluctuations are controlled using the 
same integration arguments as in section \ref{sec:2.2} for the Curie-Weiss model with constant external field, based on the relations \eqref{der_averaged}.
This yields the random field counterpart of \eqref{concen-cw}:
\begin{lemma}\label{thermal-fluctuations-rcw}
	We have $\int_{s_n}^{2s_n}d\epsilon \,  \mathbb{E}\big\langle (M - \langle M\rangle_{t, \epsilon} )^2\big\rangle_{t,\epsilon} \le \frac{2}{\beta J n}$.
\end{lemma}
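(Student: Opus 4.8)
The plan is to reproduce, in the random-field setting, the elementary computation of section~\ref{sec:2.2}, now using the two identities in \eqref{der_averaged} in place of \eqref{2ndDer}. Throughout we fix $t\in[0,1]$ and suppress it from the notation as announced, writing $R_\epsilon\equiv R_\epsilon(t)$. The starting point is the change of variables $\epsilon\mapsto R_\epsilon$ in the integral: since by hypothesis $\epsilon\mapsto R_\epsilon$ is a $\mathcal{C}^1$ diffeomorphism with Jacobian $\partial_\epsilon R_\epsilon\ge 1$, and since the integrand $\mathbb{E}\langle(M-\langle M\rangle_{t,\epsilon})^2\rangle_{t,\epsilon}$ is non-negative, one has $d\epsilon=(\partial_\epsilon R_\epsilon)^{-1}\,dR_\epsilon\le dR_\epsilon$, hence
\begin{align*}
\int_{s_n}^{2s_n} d\epsilon\,\mathbb{E}\big\langle (M-\langle M\rangle_{t,\epsilon})^2\big\rangle_{t,\epsilon}
\le \int_{R_{s_n}}^{R_{2s_n}} dR_\epsilon\,\mathbb{E}\big\langle (M-\langle M\rangle_{t,\epsilon})^2\big\rangle_{t,\epsilon}\,.
\end{align*}

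Next I would use the second relation of \eqref{der_averaged} to identify the integrand on the right-hand side as $-\tfrac{1}{n\beta J^2}\,d^2 f_{n,\epsilon}/dR_\epsilon^2$; this is legitimate because at finite $n$ the interpolating free energy is a smooth function of $R_\epsilon$. The fundamental theorem of calculus then collapses the integral to a boundary term,
\begin{align*}
\int_{R_{s_n}}^{R_{2s_n}} dR_\epsilon\,\mathbb{E}\big\langle (M-\langle M\rangle_{t,\epsilon})^2\big\rangle_{t,\epsilon}
= -\frac{1}{n\beta J^2}\Big(\frac{df_{n,\epsilon}}{dR_\epsilon}\Big|_{R_\epsilon=R_{2s_n}} - \frac{df_{n,\epsilon}}{dR_\epsilon}\Big|_{R_\epsilon=R_{s_n}}\Big)\,,
\end{align*}
and the first relation of \eqref{der_averaged}, $df_{n,\epsilon}/dR_\epsilon=-J\,\mathbb{E}\langle M\rangle_{t,\epsilon}$, together with the a priori bound $|\mathbb{E}\langle M\rangle_{t,\epsilon}|\le 1$ coming from $M(\boldsymbol\sigma)\in[-1,1]$, shows this boundary term is at most $\tfrac{1}{n\beta J^2}\cdot 2J=\tfrac{2}{\beta J n}$ in absolute value. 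Chaining the two displays yields the claim.

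There is no genuine difficulty here: the computation is a line-by-line transcription of the constant-field case, and the single point requiring attention is the passage from an $\epsilon$-integral to an $R_\epsilon$-integral, which is precisely what the standing regularity assumption on $\epsilon\mapsto R_\epsilon$ is designed to license. One should also keep in mind that the two identities \eqref{der_averaged} are invoked only at fixed finite $n$ and fixed $t$, where all the relevant functions are manifestly smooth, so no uniformity or limiting argument is needed at this stage.
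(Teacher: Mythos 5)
Your proof is correct and follows exactly the argument the paper has in mind: the paper's proof of Lemma~\ref{thermal-fluctuations-rcw} is simply the remark that the Section~\ref{sec:2.2} computation (change of variables $\epsilon\mapsto R_\epsilon$ using $\partial_\epsilon R_\epsilon\ge 1$, identification of the integrand with $-\tfrac{1}{n\beta J^2}d^2 f_{n,\epsilon}/dR_\epsilon^2$, fundamental theorem of calculus, and the bound $|df_{n,\epsilon}/dR_\epsilon|\le J$) carries over verbatim to the averaged free energy via~\eqref{der_averaged}. No discrepancy.
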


The next lemma takes care of the fluctuations due to the random external field:
\begin{lemma}\label{diseorder-fluctuations-rcw}
	There exists $C(S,J)>0$ s.t. $\int_{s_n}^{2s_n}d\epsilon \,  \mathbb{E}\big[(\langle M\rangle_{t,\epsilon} - \mathbb{E}\langle M\rangle_{t, \epsilon} )^2\big] \le C(S,J)(s_n/n)^{1/3}$.
\end{lemma}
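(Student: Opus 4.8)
The plan is to bound the disorder fluctuations of the magnetization by relating them to the fluctuations of the (non-averaged) free energy $F_{n,\epsilon}(R_\epsilon)$ with respect to the quenched field $\textbf{h}$, and then invoking the concentration of the free energy around its mean. The guiding principle is the standard chain in spin-glass interpolation arguments: first a \emph{convexity} step that converts fluctuations of a derivative into fluctuations of the function plus a derivative of the fluctuations, then a \emph{concentration-of-free-energy} step.

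The key steps, in order, are the following. First I would recall that, dropping the irrelevant $t$-dependence, $dF_{n,\epsilon}/dR_\epsilon = -J\langle M\rangle_{t,\epsilon}$ and $df_{n,\epsilon}/dR_\epsilon = -J\,\mathbb{E}\langle M\rangle_{t,\epsilon}$, so that
\begin{align*}
\mathbb{E}\big[(\langle M\rangle_{t,\epsilon} - \mathbb{E}\langle M\rangle_{t,\epsilon})^2\big] = \frac{1}{J^2}\,\mathbb{E}\Big[\Big(\frac{dF_{n,\epsilon}}{dR_\epsilon} - \frac{df_{n,\epsilon}}{dR_\epsilon}\Big)^2\Big]\,.
\end{align*}
Second, to control a fluctuation of a derivative by a fluctuation of the function itself, I would use that $R_\epsilon \mapsto F_{n,\epsilon}(R_\epsilon)$ is convex (its second derivative is $-n\beta J^2\langle(M-\langle M\rangle)^2\rangle \le 0$, hence $-F_{n,\epsilon}$ is convex in $R_\epsilon$, so with the right sign convention the usual convexity-derivative lemma applies). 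The standard lemma says: for convex differentiable $g,G$ with $G$ concentrated around $g$, the derivative $G'$ is close to $g'$ on average, quantitatively $|G'(x)-g'(x)|$ is controlled by $\delta(x) := |G(x+u)-g(x+u)| + |G(x-u)-g(x-u)| + |G(x)-g(x)|$ over a window $u$, plus the increments $g'(x+u)-g'(x-u)$. Integrating this over $\epsilon\in[s_n,2s_n]$, using the regularity of $\epsilon\mapsto R_\epsilon$ to change variables (Jacobian $\ge 1$) so the $\epsilon$-integral dominates the $R_\epsilon$-integral, the ``boundary'' increment terms telescope and are bounded by $O(u)$ since $|\langle M\rangle|\le 1$. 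Third, the term $\delta$ is handled by the concentration of the non-averaged free energy: $\mathbb{E}[(F_{n,\epsilon}(R_\epsilon)-f_{n,\epsilon}(R_\epsilon))^2] \le C(S,J)/n$, which follows from a bounded-difference/Efron–Stein argument on $-\frac{1}{\beta n}\ln\mathcal{Z}_{n,\epsilon}(t;\textbf{h})$ since each $h_i\in[-S,S]$ and changing one $h_i$ changes $\ln\mathcal{Z}$ by $O(S)$; this gives a variance $O(S^2/n)$. Putting these together yields $\int_{s_n}^{2s_n}d\epsilon\, \mathbb{E}[(\langle M\rangle-\mathbb{E}\langle M\rangle)^2] \le C'\big(s_n u + s_n^{-1}n^{-1}u^{-1} \cdot (\text{window factors})\big)$-type bound, and optimizing the free window parameter $u$ (balancing $u$ against $(ns_n)^{-1}$ through the $1/u$ coming from dividing by the window length) produces the claimed rate $C(S,J)(s_n/n)^{1/3}$.

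The main obstacle I expect is the bookkeeping in the convexity step: one must choose the auxiliary window $u$ as a function of $s_n$ and $n$, track that all arguments $R_{\epsilon\pm u}$ or $R_\epsilon \pm u$ stay inside the domain on which $F_{n,\epsilon}$ is defined and smooth, and correctly handle the change of variables $\epsilon \leftrightarrow R_\epsilon$ so that the regularity hypothesis (Jacobian $\ge 1$) is actually used to pass from an integral over $R_\epsilon$ to one over $\epsilon$. The Cauchy–Schwarz step to turn $\mathbb{E}[\delta^2]^{1/2}$ pieces into the $\sqrt{1/n}$ factor, combined with the $1/u$ and $s_n$ factors, is what forces the exponent $1/3$ after optimization in $u\sim (s_n^2 n)^{-1/3}$ or similar; getting this exponent exactly right is the only genuinely delicate point, and it mirrors exactly the analogous lemma in the matrix-estimation appendix whose rate $C(S)/(n^{1/3}s_n^{4/3})$ arises the same way. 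Everything else — convexity of $F_{n,\epsilon}$ in $R_\epsilon$, the free-energy concentration via bounded differences, the derivative identities \eqref{der_averaged} — is routine given what is already established.
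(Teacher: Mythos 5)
Your plan matches the paper's proof step for step: relate the magnetization fluctuation to $\bigl(dF_{n,\epsilon}/dR_\epsilon - df_{n,\epsilon}/dR_\epsilon\bigr)/J$, use concavity of both $F_{n,\epsilon}$ and $f_{n,\epsilon}$ in $R_\epsilon$ via the auxiliary convexity--derivative lemma (Lemma~\ref{lemmaConvexity}), invoke Efron--Stein concentration of $F_{n,\epsilon}-f_{n,\epsilon}$, change variables via regularity of $\epsilon\mapsto R_\epsilon(t)$ so the boundary increments $C^\pm_\delta$ telescope, and optimize the window. The one place your sketch drifts is the pre-optimization bound and the optimal window: the paper bounds $(C^\pm_\delta)^2\le 2J\,C^\pm_\delta$ so the telescoped term is $O(\delta)$ (not $O(s_n\delta)$), giving a bound of the form $\frac{s_n}{n\delta^2}+\delta$ that balances at $\delta=(s_n/n)^{1/3}$ rather than $(s_n^2 n)^{-1/3}$ --- but you correctly identified this as the delicate step and stated the final rate correctly, so this is the same argument.
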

\begin{proof}
The proof is based on relating the derivatives of the free energy to the magnetization, and then using that the free energy concentrates which itself is a very generic result. Relations \eqref{der_averaged} show that the averaged and non-averaged free energies are concave in $R_\epsilon$. This
concavity allows to use the following lemma (see e.g., \cite{BarbierM17a,barbier2017phase} for a proof):
\begin{lemma}\label{lemmaConvexity}
Let $G(x)$ and $g(x)$ be concave functions. Let $\delta>0$ and define $C^{-}_\delta(x) \equiv g'(x-\delta) - g'(x) \geq 0$ and $C^{+}_\delta(x) \equiv g'(x) - g'(x+\delta) \geq 0$. Then
\begin{align*}
|G'(x) - g'(x)| \leq \delta^{-1} \sum_{u \in \{x-\delta,\, x,\, x+\delta\}} |G(u)-g(u)| + C^{+}_\delta(x) + C^{-}_\delta(x)\,.
\end{align*}
\end{lemma}
Applied to $G(x) \to F_{n,\epsilon}(R_\epsilon)$ and $g(x)\to f_{n,\epsilon}(R_\epsilon)$ with $x\to R_\epsilon$ 
this gives, using \eqref{der_averaged},
\begin{align*}
J\vert \langle {M}\rangle_{t,\epsilon} - \mathbb{E}\langle {M}\rangle_{t,\epsilon}\vert & \leq 
\delta^{-1} \sum_{u\in \{R_\epsilon -\delta,\, R_\epsilon,\, R_\epsilon+\delta\}}
 \vert F_{n, \epsilon}(u) - f_{n, \epsilon}(u) \vert + C_\delta^+(R_\epsilon) 
  + C_\delta^-(R_\epsilon) 
\end{align*}
where $C_\delta^-(R_\epsilon)\equiv f_{n,\epsilon}'(R_\epsilon-\delta)- f_{n,\epsilon}'(R_\epsilon)\ge 0$ 
and $C_\delta^+(R_\epsilon)\equiv f_{n,\epsilon}'(R_\epsilon)-f_{n,\epsilon}'(R_\epsilon+\delta)\ge 0$ 
where the derivative $f_{n,\epsilon}'$ is a $R_\epsilon$-derivative. We now square this identity and take its expectation. 
Then using $(\sum_{i=1}^pv_i)^2 \le p\sum_{i=1}^pv_i^2$ by convexity it yields
\begin{align}
\frac{J^2}{5}\mathbb{E}\big[(\langle {M}\rangle_{t,\epsilon} - \mathbb{E}\langle {M}\rangle_{t,\epsilon})^2\big]&\leq 
\delta^{-2} \sum_{u\in \{R_\epsilon -\delta,\, R_\epsilon,\, R_\epsilon+\delta\}} \mathbb{E}\big[(F_{n, \epsilon}(u) - f_{n, \epsilon}(u))^2\big] 
\nonumber\\&
\qquad\qquad\qquad\qquad\qquad\qquad+ C_\delta^+(R_\epsilon)^2 + C_\delta^-(R_\epsilon)^2\,. \label{stuff}
\end{align}
At this stage we need a crucial result on the concentration of the free energy (recall $|h_i|\le S$):
\begin{proposition}\label{concentrationtheorem}
There exists $C(S)>0$ s.t. $\mathbb{E}\big[\{ F_{n, \epsilon}(t,R_\epsilon(t)) - f_{n,\epsilon}(t,R_\epsilon(t))\}^2\big] 
\leq 
\frac{C(S)}{n}$.
\end{proposition}
This result is very generic and is easily proven using standard methods (for example an application of the Efron-Stein 
inequality will give the result directly; this is where it is convenient to assume bounded fields $|h_i|<S$). 
Therefore the free energy differences in the sum in \eqref{stuff} are small. It remains to control the two other terms. 
We have the crude bound $0\leq C^{\pm}_\delta\le2\sup|f_{n,\epsilon}^\prime| = 2\,J$ from \eqref{der_averaged}. Therefore 
\begin{align*}
 \int_{s_n}^{2s_n} d\epsilon\, \big\{C_\delta^+(R_{\epsilon})^2 + C_\delta^-(R_{\epsilon})^2\big\} 
 &\leq
 2\,J
 \int_{s_n}^{2s_n} d\epsilon\, \big\{C_\delta^+(R_{\epsilon}) + C_\delta^-(R_{\epsilon})\big\}
 \nonumber \\ 
  &\leq 
  2\,J
 \int_{R_{s_n}}^{R_{2s_n}} dR_\epsilon\, \big\{C_\delta^+(R_\epsilon) + C_\delta^-(R_\epsilon)\big\}
 \nonumber \\ 
&\hspace{-3.5cm}= 
2\,J\Big[\Big( f_{n,\epsilon}(R_{s_n}+\delta) -  f_{n,\epsilon}(R_{s_n}-\delta)\Big) 
+ \Big(f_{n,\epsilon}(R_{2s_n}-\delta) - f_{n,\epsilon}(R_{2s_n}+\delta)\Big)\Big]
\end{align*}
where we used that the Jacobian of the ${\cal C}^1$ diffeomorphism $\epsilon\mapsto R_\epsilon(t)$ is $\ge 1$ (by regularity) for the second inequality. 
The mean value theorem and $|f_{n,\epsilon}^\prime| \le J$ imply $|f_{n, \epsilon}(R_\epsilon-\delta) - f_{n,\epsilon}(R_\epsilon + \delta)|\le 2\delta J$. Thus we obtain
\begin{align*}
 \int_{s_n}^{2s_n} d\epsilon\, \big\{C_\delta^+(R_{\epsilon})^2 + C_\delta^-(R_{\epsilon})^2\big\}\leq 
 8\delta J \,.
\end{align*}
Finally, integrating \eqref{stuff} over $\epsilon\in [s_n, 2s_n]$ yields
\begin{align*}
 &\int_{s_n}^{2s_n} d\epsilon\, 
 \mathbb{E}\big[(\langle {M}\rangle_{t,\epsilon} - \mathbb{E}\langle {M}\rangle_{t,\epsilon})^2\big]\leq \frac{15C(S)}{J^2}\frac{s_n}{n\delta^2}  + \frac{40}{J}\delta   \,.
\end{align*}
The bound is optimized choosing $\delta  =  (s_n/n)^{1/3}$. This ends the proof of Lemma \ref{diseorder-fluctuations-rcw}.	
\end{proof}

Combining Lemmas \ref{thermal-fluctuations-rcw} and
\ref{diseorder-fluctuations-rcw} together with Fubini's theorem yields \eqref{magn_conc}.
\subsection*{(A3) Proof of \eqref{boundaries_WS}}
Let us show how to obtain the second stationary condition in \eqref{boundaries_WS}. It follows from the I-MMSE theorem \cite{GuoShamaiVerdu_IMMSE}, but we prove it for completeness. Starting from $\partial_{r} f^{\rm pot}(q,r) = 0$ gives
\begin{align*}
q 	= 2\frac{d}{dr} \mathbb{E} \ln \int d{P}_0(x) e^{-\big(r\,\frac{x^2}{2}-r\, x X -\sqrt{r}\,x Z\big)}= \mathbb{E}\Big\langle -x^2+2xX+\frac{xZ}{\sqrt{r}}\Big\rangle_r
\end{align*}
where $\langle - \rangle_r$ is the expectation operator w.r.t. the probability density over $x$ that is proportional to $P_0(x) \exp\{-(r\,x^2/2-r\, x X -\sqrt{r}\,x Z)\}$. Now we integrate by part the Gaussian variable using the elementary formula $\mathbb{E}[Zg(Z)]=\mathbb{E}\,g'(Z)$ for $Z\sim {\cal N}(0,1)$ and $g$ a bounded function. This gives $\mathbb{E}[Z\langle x\rangle_r]/\sqrt{r}=\mathbb{E}[\langle x^2\rangle_r-\langle x\rangle_r^2]$. Therefore $$\mathbb{E}\Big\langle -x^2+2xX+\frac{xZ}{\sqrt{r}}\Big\rangle_r = 2\mathbb{E}[X\langle x\rangle_r]-\mathbb{E}[\langle x\rangle_r^2]=\mathbb{E}[X\langle x\rangle_r]$$ where we used the Nishimori identity Lemma~\ref{NishId}, which implies $\mathbb{E}[X\langle x\rangle_r]=\mathbb{E}[\langle x\rangle_r^2]$, for the last equality. Developing ${\rm mmse}(X|\sqrt{r}\,X + Z)$ given by \eqref{scalarMMSE} we get
\begin{align*}
{\rm mmse}(X|\sqrt{r}\,X + Z)=\mathbb{E}[X^2]-2\mathbb{E}[ X\langle x \rangle_r]+\	\mathbb{E}[\langle x \rangle_r^2] = \rho-\mathbb{E}[ X\langle x \rangle_r]
\end{align*}
recalling $\mathbb{E}[X^2]=\rho$ as $X\sim P_0$, and using again the same Nishimori identity. Combining everything finishes the proof.

\subsection*{(A4) Free energy $t$-derivative: proof of identity \eqref{34}}\label{compDerivative}
Let us compute $\frac{df_{n,\epsilon}(t)}{dt}=\frac{1}{n}\mathbb{E}\big\langle\frac{d {\cal H}_{t,\epsilon}}{dt}\big\rangle_{t,\epsilon}$. Starting from \eqref{Ht}, \eqref{fnt} one obtains 
\begin{align*}
&\frac{df_{n,\epsilon}}{dt}\!=\!\frac{1}{n}\mathbb{E}\Big\langle r_\epsilon(t)\sum_{i=1}^n\!\Big(\frac{x_i^2}{2} - x_iX_i-\frac{x_i\widetilde Z_i}{2\sqrt{R_\epsilon(t)}} \Big) - \sum_{i\le j}^n\!\Big(\frac{x_i^2x_j^2}{2n} - \frac{x_ix_jX_iX_j}{n}-\frac{x_ix_j Z_{ij}}{2\sqrt{n(1-t)}} \Big) \Big\rangle_{t,\epsilon}.
\end{align*}
Now we integrate by part the Gaussian noise using, as in the previous appendix, $\mathbb{E}[Zg(Z)]=\mathbb{E}\,g'(Z)$ for $Z\sim {\cal N}(0,1)$. This leads to
\begin{align*}
\frac{df_{n,\epsilon}}{dt}= \frac{1}{n}\mathbb{E}\Big\langle r_\epsilon(t)\sum_{i=1}^n\Big(\frac{x_ix_i'}{2}- x_iX_i\Big)- \sum_{i\le j}^n\Big(\frac{x_ix_j x_i'x_j'}{2n}- \frac{x_ix_jX_iX_j}{n} \Big) \Big\rangle_{t,\epsilon}
\end{align*}
where $\bx$, $\bx'$ are two i.i.d. replicas drawn according to the posterior \eqref{tpost} (and thus share the same quenched variables). Slightly abusing notation, we continue to use the notation $\langle A(\bx,\bx')\rangle_{t,\epsilon}$ for the expectation w.r.t. the product measure $dP_{t,\epsilon}(\bx|\bW,\widetilde \bW)dP_{t,\epsilon}(\bx'|\bW,\widetilde \bW)$ of the i.i.d. replicas. An application of the Nishimori identities $\mathbb{E}[\langle x_i\rangle_{t,\epsilon} X_i] = \mathbb{E}\langle x_i x_i^\prime\rangle_{t,\epsilon}$ and $\mathbb{E}[\langle x_ix_j\rangle_{t,\epsilon} X_iX_j] = \mathbb{E}\langle x_ix_j x_i^\prime x_j^\prime\rangle_{t,\epsilon}$ then yields
\begin{align}
\frac{df_{n,\epsilon}}{dt}&= \frac{1}{2}\mathbb{E}\Big\langle \frac{1}{n^2} \sum_{i\le j}^nx_ix_jX_iX_j -\frac{r_\epsilon(t)}{n}\sum_{i=1}^n x_iX_i  \Big\rangle_{t,\epsilon} \nonumber\\
&= \frac{1}{2}\mathbb{E}\Big\langle  \frac{1}{2n^2} \sum_{i,j=1}^n x_ix_jX_iX_j +\frac{1}{2n^2} \sum_{i=1}^n x_i^2X_i^2 -\frac{r_\epsilon(t)}{n}\sum_{i=1}^n x_iX_i \Big\rangle_{t,\epsilon}\,. \label{last}
\end{align}
The Cauchy-Schwarz inequality combined with the Nishimori identity imply that, as long as $P_0$ has bounded fourth moment, $\mathbb{E}\langle \frac{1}{n^{2}}\sum_{i=1}^n x_i^2X_i^2\rangle_{t,\epsilon} =O(1/n)$ . Indeed, by Cauchy-Schwarz
\begin{align*}
 \mathbb{E}\Big\langle \frac1n \sum_{i=1}^n x_i^2X_i^2\Big\rangle_{t,\epsilon} \leq 
 \Big\{\mathbb{E}\Big\langle \frac1n\sum_{i=1}^n x_i^4\Big\rangle_{t,\epsilon}\,\mathbb{E}\Big[\frac1n\sum_{i=1}^n X_i^4\Big]\Big\}^{1/2}= \mathbb{E}_{P_0}[X_1^4]
\end{align*}
where we used the Nishimori identity $\mathbb{E}\langle x_i^4\rangle_{t,\epsilon} = \mathbb{E}[ X_i^4]$ and the independence of the signal entries for getting the last equality. Expressing the two other terms in \eqref{last} with the help of the overlap $Q=\frac1n\sum_{i=1}^n x_iX_i$ gives \eqref{34}.
\subsection*{(A5) Concentration of the overlap: proof of inequality \eqref{over-concen}}\label{appendix-overlap}
As before, here the $t$-dependence will not play a role and the whole argument applies to any $t\in[0,1]$. We then simply 
denote $R_{\epsilon}\equiv R_{\epsilon}(t)$ and $f_{n,\epsilon}(t, R_\epsilon(t)) =f_{n,\epsilon}(R_\epsilon)$. 
The concentration property \eqref{over-concen} is again proven under the assumption 
that the map $\epsilon\in [s_n, 2s_n] \mapsto R_\epsilon(t)\in [R_{s_n}(t), R_{2s_n}(t)]$ is regular, and we 
do not repeat this assumption in the statements below. The proof is quite generic and the general ideas apply also to other problems. 

Let 
\begin{align}
\mathcal{L}(\bx,\bX,\bZ) =\mathcal{L} \equiv \frac{1}{n}\sum_{i=1}^n\Big(\frac{x_i^2}{2} - x_i X_i - \frac{x_i \widetilde Z_i}{2\sqrt{R_\epsilon(t)}} \Big)\,.\label{def_L}
\end{align}
Up to the prefactor $1/n$ this quantity is the $R_\epsilon$-derivative of ${\cal H}_{t,\epsilon}$ given by \eqref{Ht}. The overlap fluctuations are upper bounded by those of $\mathcal{L}$, which are easier to control, as
% \begin{align}
% \mathbb{E}\big\langle (\mathcal{L} - \mathbb{E}\langle \mathcal{L}\rangle_{t,\epsilon})^2\big\rangle_{t,\epsilon}
% &=\frac{1}{4}\mathbb{E}\big\langle (Q - \mathbb{E}\langle Q \rangle_{t,\epsilon})^2\big\rangle_{t,\epsilon}
% + \frac{1}{2}\mathbb{E}\big\langle (Q-   \langle Q \rangle_{t,\epsilon})^2\big\rangle_{t,\epsilon}
% \nonumber
% \\
% &\qquad 
% + \frac{1}{4n^2 R_\epsilon} \sum_{i=1}^n \mathbb{E}\big\langle (x_i - \langle x_i\rangle_{t,\epsilon})^2\big\rangle_{t,\epsilon}
% \nonumber\\
% &
% \ge \frac{1}{4}\mathbb{E}\big\langle (Q - \mathbb{E}\langle Q \rangle_{t,\epsilon})^2\big\rangle_{t,\epsilon}\,.
% \label{remarkable}
% \end{align}
%
\begin{align}
\mathbb{E}\big\langle (Q - \mathbb{E}\langle Q \rangle_{t,\epsilon})^2\big\rangle_{t,\epsilon} \le 4\,\mathbb{E}\big\langle (\mathcal{L} - \mathbb{E}\langle \mathcal{L}\rangle_{t,\epsilon})^2\big\rangle_{t,\epsilon}\,.\label{remarkable}
\end{align}
A detailed derivation can be found in the last appendix and involves only elementary algebra using the Nishimori identity
and integrations by parts w.r.t.\ the Gaussian noise $\widetilde Z_i$. The concentration inequality \eqref{over-concen} is then a direct consequence of the following result (combined with Fubini's theorem):
\begin{proposition} There exists $C(S)>0$ s.t. $\int_{s_n}^{2s_n} d\epsilon\, 
\mathbb{E}\big\langle (\mathcal{L} - \mathbb{E}\langle \mathcal{L}\rangle_{t,\epsilon})^2\big\rangle_{t,\epsilon} \le C(S)(s_nn)^{-1/3}$.\label{L-concentration}
\end{proposition}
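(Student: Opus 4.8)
\textbf{Proof plan for Proposition~\ref{L-concentration}.} The plan is to split the fluctuations of $\mathcal{L}$ into a thermal part and a quenched (disorder) part, exactly as was done for the magnetization in appendix (A2):
\begin{align*}
\mathbb{E}\big\langle (\mathcal{L} - \mathbb{E}\langle \mathcal{L}\rangle_{t,\epsilon})^2\big\rangle_{t,\epsilon}
= \mathbb{E}\big\langle (\mathcal{L} - \langle \mathcal{L}\rangle_{t,\epsilon})^2\big\rangle_{t,\epsilon}
+ \mathbb{E}\big[(\langle \mathcal{L}\rangle_{t,\epsilon} - \mathbb{E}\langle \mathcal{L}\rangle_{t,\epsilon})^2\big]\,.
\end{align*}
I would first record that, because $\mathcal{L}$ is (up to the factor $1/n$) the $R_\epsilon$-derivative of ${\cal H}_{t,\epsilon}$ in \eqref{Ht}, one has the identities
$d f_{n,\epsilon}/dR_\epsilon = \mathbb{E}\langle \mathcal{L}\rangle_{t,\epsilon}$ and, after a Gaussian integration by parts w.r.t. $\widetilde Z_i$ that removes the awkward $\widetilde Z_i/\sqrt{R_\epsilon}$ term, $d^2 f_{n,\epsilon}/dR_\epsilon^2 = -\,n\,\mathbb{E}\langle (\mathcal{L} - \langle\mathcal{L}\rangle_{t,\epsilon})^2\rangle_{t,\epsilon} - (\text{a small boundary term of order }1/n)$; the same relations hold without $\mathbb{E}$ for the quenched free energy $F_{n,\epsilon}(R_\epsilon)$, so both $f_{n,\epsilon}$ and $F_{n,\epsilon}$ are essentially concave in $R_\epsilon$ (concave up to an $O(1/n)$ correction, which can be absorbed). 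This concavity is what makes the rest go through.

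For the thermal part, the relation $d^2 f_{n,\epsilon}/dR_\epsilon^2 = -\,n\,\mathbb{E}\langle(\mathcal{L}-\langle\mathcal{L}\rangle_{t,\epsilon})^2\rangle_{t,\epsilon} + O(1/n)$ lets me integrate in $\epsilon$ and, using regularity of $\epsilon\mapsto R_\epsilon(t)$ (Jacobian $\ge 1$) to pass to an $R_\epsilon$-integral, telescope to $d f_{n,\epsilon}/dR_\epsilon$ evaluated at the endpoints $R_{s_n},R_{2s_n}$, which is bounded in absolute value by a constant $C(S)$ (since $\mathbb{E}\langle\mathcal{L}\rangle_{t,\epsilon}$ is bounded in terms of $S$, using $|x_i|,|X_i|\le S$ and the Nishimori identity). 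This gives $\int_{s_n}^{2s_n} d\epsilon\,\mathbb{E}\langle(\mathcal{L}-\langle\mathcal{L}\rangle_{t,\epsilon})^2\rangle_{t,\epsilon} \le C(S)/n$. For the quenched part, I would apply the convexity Lemma~\ref{lemmaConvexity} with $G\to F_{n,\epsilon}(R_\epsilon)$, $g\to f_{n,\epsilon}(R_\epsilon)$ and $x\to R_\epsilon$, square, take $\mathbb{E}$, and bound the three free-energy-difference terms by the free energy concentration $\mathbb{E}[(F_{n,\epsilon}-f_{n,\epsilon})^2]\le C(S)/n$ (the analogue of Proposition~\ref{concentrationtheorem}, provable by Efron--Stein exactly as there, using the bounded support of $P_0$ and the Gaussianity of the noises), and bound the two derivative-difference terms $C_\delta^\pm$ using $0\le C_\delta^\pm \le 2\sup|f_{n,\epsilon}'| \le C(S)$ together with the telescoping-in-$R_\epsilon$ trick and $|f_{n,\epsilon}'|\le C(S)$, to get their $\epsilon$-integral $\le C(S)\delta$. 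Collecting terms, $\int_{s_n}^{2s_n}d\epsilon\,\mathbb{E}[(\langle\mathcal{L}\rangle_{t,\epsilon}-\mathbb{E}\langle\mathcal{L}\rangle_{t,\epsilon})^2] \le C(S)\big(\tfrac{s_n}{n\delta^2} + \delta\big)$, and optimizing over $\delta$ with $\delta = (s_n/n)^{1/3}$ yields the bound $C(S)(s_n/n)^{1/3}\le C(S)(s_n n)^{-1/3}$ (the last inequality using $s_n\le 1$), which dominates the thermal $C(S)/n$ contribution. Adding the two parts proves the proposition; combining with \eqref{remarkable} and Fubini gives \eqref{over-concen}.

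The part requiring the most care is the Gaussian integration by parts producing the identity for $d^2 f_{n,\epsilon}/dR_\epsilon^2$ in terms of the $\mathcal{L}$-fluctuations: because $\mathcal{L}$ itself contains the $\widetilde Z_i/(2\sqrt{R_\epsilon})$ term, differentiating it in $R_\epsilon$ brings extra terms that must be shown (again via integration by parts and the Nishimori identity) to reorganize into the clean variance $-n\,\mathbb{E}\langle(\mathcal{L}-\langle\mathcal{L}\rangle_{t,\epsilon})^2\rangle_{t,\epsilon}$ plus a harmless $O(1/n)$ boundary piece; this is the one genuinely model-specific computation, and everything downstream is the same abstract concavity-plus-free-energy-concentration machinery already used in appendix (A2). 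The inequality \eqref{remarkable} relating $Q$-fluctuations to $\mathcal{L}$-fluctuations is, as stated in the excerpt, elementary algebra with the Nishimori identity and integration by parts, and I would relegate it to the final appendix as the authors indicate.
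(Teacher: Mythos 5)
Your high-level skeleton (thermal/quenched decomposition, relate variances to $R_\epsilon$-derivatives of the free energy, invoke Lemma~\ref{lemmaConvexity}, use free-energy concentration, optimize in $\delta$) is indeed the paper's route. But there is a genuine gap in the quenched-fluctuations part, and it is not a detail that ``can be absorbed.''

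You assert that ``both $f_{n,\epsilon}$ and $F_{n,\epsilon}$ are essentially concave in $R_\epsilon$ (concave up to an $O(1/n)$ correction),'' and then apply Lemma~\ref{lemmaConvexity} directly with $G\to F_{n,\epsilon}$, $g\to f_{n,\epsilon}$. This is false for $F_{n,\epsilon}$. Because $\mathcal{L}$ contains the term $-x_i\widetilde Z_i/(2n\sqrt{R_\epsilon})$, the pointwise second derivative is
\[
\frac{1}{n}\frac{d^2F_{n,\epsilon}}{dR_\epsilon^2}
= -\big\langle (\mathcal{L}-\langle\mathcal{L}\rangle_{t,\epsilon})^2\big\rangle_{t,\epsilon}
+ \frac{1}{4n^2R_\epsilon^{3/2}}\sum_{i=1}^n \langle x_i\rangle_{t,\epsilon}\widetilde Z_i\,,
\]
and the extra term has indefinite sign (it depends on the sign of $\widetilde Z_i$) and is of size $O\!\big(1/(n R_\epsilon^{3/2})\big)$, which for $R_\epsilon\sim s_n$ is not $O(1/n)$ at all. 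So $F_{n,\epsilon}$ is genuinely not concave, and Lemma~\ref{lemmaConvexity} (which requires both functions concave) does not apply to the pair $(F_{n,\epsilon},f_{n,\epsilon})$. The paper repairs this by working with the modified functions
$\widetilde F(R_\epsilon) = F_{n,\epsilon}(R_\epsilon) + \frac{\sqrt{R_\epsilon}}{n}S\sum_i|\widetilde Z_i|$ and its average $\widetilde f$, whose second derivatives are nonpositive because $\langle x_i\rangle_{t,\epsilon}\widetilde Z_i \le S|\widetilde Z_i|$, then applies Lemma~\ref{lemmaConvexity} to $(\widetilde F,\widetilde f)$ and transfers back via \eqref{fdiff}--\eqref{derdiff}, picking up a controllable $A=\frac1n\sum_i(|\widetilde Z_i|-\mathbb{E}|\widetilde Z_i|)$ correction. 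Your ``same abstract machinery as appendix (A2)'' claim fails precisely here: in (A2) the non-averaged Curie-Weiss free energy in a random field really is concave in $R_\epsilon$, so no surgery is needed, while in the present model it is not.

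This gap also propagates into your $\delta$-optimization. You bound $C_\delta^\pm \le 2\sup|f_{n,\epsilon}'|\le C(S)$ and then optimize $s_n/(n\delta^2)+\delta$ at $\delta=(s_n/n)^{1/3}$. But once you pass to $\widetilde f$, the derivative bound becomes $|\widetilde f'(R_\epsilon)|\le\frac12(\rho + S/\sqrt{R_\epsilon})\le\frac12(\rho+S/\sqrt{\epsilon})$, which is not uniform in $\epsilon$. The resulting $C_\delta^\pm$ contribution scales like $\delta(\rho+S/\sqrt{s_n-\delta})^2\sim\delta/s_n$, not $\delta$, and the correct optimizer is $\delta=s_n^{2/3}n^{-1/3}$, giving $(s_n n)^{-1/3}$ directly (not your route of first getting $(s_n/n)^{1/3}$ and then loosening via $s_n\le1$). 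The minor imprecision in your thermal-part computation (the correction in $d^2 f_{n,\epsilon}/dR_\epsilon^2$ is $O(1/(n R_\epsilon))$ pointwise, positive, and only becomes $O(\log 2/n)$ after the $\epsilon$-integration with $R_\epsilon\ge\epsilon$) is harmless. The concavity gap for $F_{n,\epsilon}$, however, is essential and needs the $\widetilde F,\widetilde f$ modification to be fixed.
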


The proof of this proposition is broken in two parts, using again the decomposition
\begin{align*}
\mathbb{E}\big\langle (\mathcal{L} - \mathbb{E}\langle \mathcal{L}\rangle_{t,\epsilon})^2\big\rangle_{t,\epsilon}
& = 
\mathbb{E}\big\langle (\mathcal{L} - \langle \mathcal{L}\rangle_{t,\epsilon})^2\big\rangle_{t,\epsilon}
+ 
\mathbb{E}\big[(\langle \mathcal{L}\rangle_{t,\epsilon} - \mathbb{E}\langle \mathcal{L}\rangle_{t,\epsilon})^2\big]\,.
\end{align*}
Thus it suffices to prove the two following lemmas. The first lemma expresses concentration w.r.t.\ the posterior distribution (or ``thermal fluctuations'') and is a direct consequence of concavity properties of the average free energy and the Nishimori identity.
\begin{lemma}\label{thermal-fluctuations}
	We have $\int_{s_n}^{2s_n} d\epsilon\, 
  \mathbb{E} \big\langle (\mathcal{L} - \langle \mathcal{L}\rangle_{t,\epsilon})^2 \big\rangle_{t,\epsilon}  \le \frac{\rho}{n}(1+\frac{\ln2}{4})$.
\end{lemma}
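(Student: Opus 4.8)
The plan is to relate the thermal (posterior) fluctuations of $\mathcal{L}$ to the second $R_\epsilon$-derivative of the interpolating free energy, and then to integrate over $\epsilon\in[s_n,2s_n]$, using the concavity of $f_{n,\epsilon}$ in $R_\epsilon$ (a consequence of the Bayes-optimal setting via the Nishimori identity) together with the regularity $\partial_\epsilon R_\epsilon(t)\ge 1$. First I would observe that $\mathcal{L}=\frac1n\partial_{R_\epsilon}{\cal H}_{t,\epsilon}$ (immediate from \eqref{Ht} and \eqref{def_L}), so that at fixed $t$, treating $f_{n,\epsilon}$ as a function of the value $R_\epsilon$,
\[
\frac{\partial f_{n,\epsilon}}{\partial R_\epsilon}=\mathbb{E}\langle\mathcal{L}\rangle_{t,\epsilon},\qquad
\frac{\partial^2 f_{n,\epsilon}}{\partial R_\epsilon^2}=\mathbb{E}\langle\partial_{R_\epsilon}\mathcal{L}\rangle_{t,\epsilon}-n\,\mathbb{E}\big\langle (\mathcal{L}-\langle\mathcal{L}\rangle_{t,\epsilon})^2\big\rangle_{t,\epsilon},
\]
the interchanges of $\mathbb{E}$ and $d/dR_\epsilon$ being legitimate since $P_0$ is supported on $[-S,S]$. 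Rearranging gives the basic identity $n\,\mathbb{E}\langle (\mathcal{L}-\langle\mathcal{L}\rangle_{t,\epsilon})^2\rangle_{t,\epsilon}=\mathbb{E}\langle\partial_{R_\epsilon}\mathcal{L}\rangle_{t,\epsilon}-\partial_{R_\epsilon}^2 f_{n,\epsilon}$, which I would then integrate over $\epsilon$ and bound term by term.

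For the correction term, a Gaussian integration by parts with respect to $\widetilde Z_i$ and the Nishimori identity $\mathbb{E}\langle x_i^2\rangle_{t,\epsilon}=\rho$ give
\[
0\le\mathbb{E}\langle\partial_{R_\epsilon}\mathcal{L}\rangle_{t,\epsilon}=\frac{1}{4nR_\epsilon(t)}\sum_{i=1}^n\mathbb{E}\big[\langle x_i^2\rangle_{t,\epsilon}-\langle x_i\rangle_{t,\epsilon}^2\big]\le\frac{\rho}{4R_\epsilon(t)}\le\frac{\rho}{4\epsilon},
\]
using $R_\epsilon(t)=\epsilon+\int_0^t r_\epsilon\ge\epsilon$; hence $\int_{s_n}^{2s_n}\mathbb{E}\langle\partial_{R_\epsilon}\mathcal{L}\rangle_{t,\epsilon}\,d\epsilon\le\frac{\rho}{4}\ln2$. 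For the curvature term, the key structural input is that (by the Nishimori identity plus a Gaussian integration by parts, as in the derivation of \eqref{34}) one has $\partial_{R_\epsilon}f_{n,\epsilon}=\mathbb{E}\langle\mathcal{L}\rangle_{t,\epsilon}=-\tfrac12\mathbb{E}\langle Q\rangle_{t,\epsilon}$, and that $\partial_{R_\epsilon}\mathbb{E}\langle Q\rangle_{t,\epsilon}\ge 0$ — the very computation already used to prove regularity of the flow, reflecting that the MMSE cannot increase with the SNR. Thus $R_\epsilon\mapsto f_{n,\epsilon}$ is concave. Combining $\partial_{R_\epsilon}^2 f_{n,\epsilon}\le 0$ with $\partial_\epsilon R_\epsilon(t)\ge 1$ gives $-\partial_{R_\epsilon}^2 f_{n,\epsilon}\le -\frac{d}{d\epsilon}\big[\partial_{R_\epsilon}f_{n,\epsilon}\big]$ along the curve $R=R_\epsilon(t)$, whence
\[
-\int_{s_n}^{2s_n}\partial_{R_\epsilon}^2 f_{n,\epsilon}\,d\epsilon\le\partial_{R_\epsilon}f_{n,\epsilon}\big|_{R_{s_n}(t)}-\partial_{R_\epsilon}f_{n,\epsilon}\big|_{R_{2s_n}(t)}\le\rho,
\]
since $|\partial_{R_\epsilon}f_{n,\epsilon}|=\tfrac12\mathbb{E}\langle Q\rangle_{t,\epsilon}\in[0,\tfrac{\rho}{2}]$ (using $\mathbb{E}\langle Q\rangle_{t,\epsilon}=\frac1n\sum_i\mathbb{E}\langle x_i\rangle_{t,\epsilon}^2\in[0,\rho]$ by Nishimori).

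Adding the two bounds and dividing by $n$ yields $\int_{s_n}^{2s_n}\mathbb{E}\langle (\mathcal{L}-\langle\mathcal{L}\rangle_{t,\epsilon})^2\rangle_{t,\epsilon}\,d\epsilon\le\frac{\rho}{n}\big(1+\frac{\ln2}{4}\big)$, which is the claim. The main obstacle — and the reason this "thermal" concentration is genuinely more delicate than in the Curie–Weiss model — is the fluctuation–dissipation link itself: both $\partial_{R_\epsilon}f_{n,\epsilon}=-\tfrac12\mathbb{E}\langle Q\rangle_{t,\epsilon}$ and the concavity of $f_{n,\epsilon}$ in $R_\epsilon$ rest on the Nishimori identities of Lemma~\ref{NishId}, i.e. on Bayes-optimality, and it is precisely this that rules out replica symmetry breaking here. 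The only bookkeeping subtlety is to keep the correction term $\mathbb{E}\langle\partial_{R_\epsilon}\mathcal{L}\rangle_{t,\epsilon}$ as an integral in $\epsilon$ and use $R_\epsilon(t)\ge\epsilon\ge s_n$ there, rather than changing variables to $R_\epsilon$, which would produce an uncontrolled $\ln(R_{2s_n}(t)/R_{s_n}(t))$ as $s_n\to 0$; regularity $\partial_\epsilon R_\epsilon\ge 1$ is then invoked only for the concave curvature term, where it points in the favourable direction.
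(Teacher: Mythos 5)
Your proof is correct and follows essentially the same route as the paper's: the identity relating $n\,\mathbb{E}\langle(\mathcal{L}-\langle\mathcal{L}\rangle_{t,\epsilon})^2\rangle_{t,\epsilon}$ to $-\partial_{R_\epsilon}^2 f_{n,\epsilon}$ plus the correction $\mathbb{E}\langle\partial_{R_\epsilon}\mathcal{L}\rangle_{t,\epsilon}=\frac{1}{4nR_\epsilon}\sum_i\mathbb{E}[\langle x_i^2\rangle_{t,\epsilon}-\langle x_i\rangle_{t,\epsilon}^2]$ is exactly the paper's \eqref{directcomputation}, the bound $R_\epsilon\geq\epsilon$ on the correction, the use of regularity (Jacobian $\geq1$ together with concavity of $f_{n,\epsilon}$ in $R_\epsilon$) to handle the curvature term, and the endpoint bound $|\partial_{R_\epsilon}f_{n,\epsilon}|=\tfrac12\mathbb{E}\langle Q\rangle_{t,\epsilon}\leq\rho/2$ all coincide. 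Your explicit chain-rule phrasing is an equivalent rewriting of the paper's change of variable, and your remark about keeping the correction term as an $\epsilon$-integral rather than an $R_\epsilon$-integral is a minor but genuine clarification of why the two terms are treated asymmetrically.
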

\begin{proof}
%Ai dis que la regularite est importante dans toute la section au tout debut.
%In spite that this proof can be found in references \cite{BarbierM17a,barbier2017phase} we repeat the arguments here for the sake of completeness\footnote{The concentration bound for the thermal fluctuations found in \cite{BarbierM17a} in sub-optimal by a factor $s_n$, while the present bound has optimal scaling.}, and also in order to show why the regularity condition of the map $\epsilon \mapsto R_\epsilon(t)$ is required. 

We emphasize again that the interpolating free energy \eqref{fnt} is here viewed as a function of $R_\epsilon$. In the argument that follows we consider derivatives of this function w.r.t. $R_\epsilon$.
By direct computation
\begin{align}
\mathbb{E}\big\langle (\mathcal{L} - \langle \mathcal{L} \rangle_{t,\epsilon})^2\big\rangle_{t,\epsilon}
& = 
-\frac{1}{n}\frac{d^2f_{n,\epsilon}}{dR_{\epsilon}^2}
+\frac{1}{4n^2R_{\epsilon}} \sum_{i=1}^n \mathbb{E}\big[\langle x_i^2\rangle_{t,\epsilon} - \langle x_i\rangle_{t,\epsilon}^2\big] 
\nonumber \\ &
\leq 
-\frac{1}{n}\frac{d^2f_{n,\epsilon}}{dR_{\epsilon}^2} +\frac{\rho}{4n\epsilon} \,,
\label{directcomputation}
\end{align}
where we used $R_{\epsilon}\geq \epsilon$ and $\mathbb{E}\langle x_i^2\rangle_{t,\epsilon} = \mathbb{E}[X_i^2]=\rho$ (an application of the Nishimori identity). 
We integrate this inequality over $\epsilon\in [s_n, 2s_n]$. Recall the map 
$\epsilon\in [s_n, 2s_n]\mapsto R_{\epsilon}(t)\in [R_{s_n}(t), R_{2s_n}(t)]$ has a Jacobian $\ge 1$, is ${\cal C}^1$ 
and has a well defined ${\cal C}^1$ inverse since we have assumed that it is regular. 
% Note also that combining the equalities in \eqref{remarkable} and \eqref{directcomputation} we see that $-\frac{1}{n}\frac{d^2f_{n,\epsilon}}{dR_{\epsilon}^2} \geq 0$. 
Thus integrating \eqref{directcomputation} and performing a change of variable (to get the second inequality) we obtain
\begin{align*}
\int_{s_n}^{2s_n} d\epsilon\, \mathbb{E}\big\langle (\mathcal{L} - \langle \mathcal{L} \rangle_{t,\epsilon})^2\big\rangle_{t,\epsilon}
& \leq 
- \frac{1}{n}\int_{s_n}^{2s_n} d\epsilon \,\frac{d^2f_{n,\epsilon}}{dR_{\epsilon}^2} + \frac{\rho}{4n}\int_{s_n}^{2s_n} \,\frac{d\epsilon}{\epsilon} 
\nonumber \\ &
\leq 
- \frac{1}{n}\int_{R_{s_n}}^{R_{2s_n}} dR_{\epsilon} \,\frac{d^2f_{n,\epsilon}}{dR_{\epsilon}^2}
+ \frac{\rho}{4n}\int_{s_n}^{2s_n} \,\frac{d\epsilon}{\epsilon} 
\nonumber \\ &
=
\frac{1}{n}\Big(\frac{df_{n,\epsilon}}{dR_{\epsilon}}(R_{s_n}) 
- \frac{df_{n,\epsilon}}{dR_{\epsilon}}(R_{2s_n})\Big)+ \frac{\rho}{4n}\ln 2\,.
\end{align*}
We have $|d f_{n, \epsilon}/dR_\epsilon| = |\mathbb{E}\langle Q\rangle_{t,\epsilon}/2|\le \rho/2$ so the first term is certainly smaller in absolute value than $\rho/n$. This concludes the proof of Lemma \ref{thermal-fluctuations}.
\end{proof}

The second lemma expresses the concentration w.r.t.\ the quenched disorder variables
and is a consequence of the concentration of the free energy onto its average (w.r.t. the quenched variables).
\begin{lemma}\label{disorder-fluctuations}
	There exists $C(S)>0$ s.t. $\int_{s_n}^{2s_n} d\epsilon\, 
  \mathbb{E}\big[ (\langle \mathcal{L}\rangle_{t,\epsilon} - \mathbb{E}\langle \mathcal{L}\rangle_{t,\epsilon})^2\big] \le C(S)(s_n\,n)^{-1/3}$.
\end{lemma}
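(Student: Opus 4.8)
The plan is to mirror the proof of Lemma~\ref{diseorder-fluctuations-rcw} — reduce the quenched fluctuations of $\langle\mathcal{L}\rangle_{t,\epsilon}$ to the concentration of the free energy via the convexity Lemma~\ref{lemmaConvexity} — but with one extra twist to cope with the square‑root SNR of the side channel. As in \eqref{def_L}, $n\mathcal{L}$ is the $R_\epsilon$‑derivative of the Hamiltonian \eqref{Ht}, so, writing $F_{n,\epsilon}(R_\epsilon)\equiv-\frac1n\ln\mathcal{Z}_{n,\epsilon}(t;\bX,\bZ,\bZt)$ and $f_{n,\epsilon}(R_\epsilon)\equiv\mathbb{E}\,F_{n,\epsilon}(R_\epsilon)$ viewed as functions of $R_\epsilon$ at fixed $t$, one has $F_{n,\epsilon}'(R_\epsilon)=\langle\mathcal{L}\rangle_{t,\epsilon}$ and $f_{n,\epsilon}'(R_\epsilon)=\mathbb{E}\langle\mathcal{L}\rangle_{t,\epsilon}$. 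The quantity to control is thus $\int_{s_n}^{2s_n}d\epsilon\,\mathbb{E}[(F_{n,\epsilon}'(R_\epsilon)-f_{n,\epsilon}'(R_\epsilon))^2]$. Here $f_{n,\epsilon}$ is concave in $R_\epsilon$ (its second derivative equals $-\frac12\partial_{R_\epsilon}\mathbb{E}\langle Q\rangle_{t,\epsilon}=-\frac1{2n}\sum_{i,j}\mathbb{E}[(\langle x_ix_j\rangle_{t,\epsilon}-\langle x_i\rangle_{t,\epsilon}\langle x_j\rangle_{t,\epsilon})^2]\le0$, by the Nishimori computation used for the regularity of the ODE flow), but $F_{n,\epsilon}$ is \emph{not}: a direct computation gives $F_{n,\epsilon}''(R_\epsilon)=\frac1{4nR_\epsilon^{3/2}}\langle\sum_i x_i\widetilde Z_i\rangle_{t,\epsilon}-n\langle(\mathcal{L}-\langle\mathcal{L}\rangle_{t,\epsilon})^2\rangle_{t,\epsilon}$, whose first term has indefinite sign.

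The key observation is that, since $P_0$ is supported on $[-S,S]$, one has $|\langle\sum_i x_i\widetilde Z_i\rangle_{t,\epsilon}|\le S\,\|\bZt\|_1\le S\sqrt n\,\|\bZt\|_2$ realisation‑wise, so the corrected free energies
\begin{align*}
\widehat F_{n,\epsilon}(R_\epsilon)\equiv F_{n,\epsilon}(R_\epsilon)+\frac{S\|\bZt\|_2}{\sqrt n}\sqrt{R_\epsilon}\,,\qquad
\widehat f_{n,\epsilon}(R_\epsilon)\equiv\mathbb{E}\,\widehat F_{n,\epsilon}(R_\epsilon)=f_{n,\epsilon}(R_\epsilon)+\frac{S\,\mathbb{E}\|\bZt\|_2}{\sqrt n}\sqrt{R_\epsilon}
\end{align*}
are both concave in $R_\epsilon$ (the added term has second derivative $-\tfrac14 R_\epsilon^{-3/2}$ times the right factor), and
$\widehat F_{n,\epsilon}'(R_\epsilon)-\widehat f_{n,\epsilon}'(R_\epsilon)=\big(F_{n,\epsilon}'(R_\epsilon)-f_{n,\epsilon}'(R_\epsilon)\big)+\frac{S(\|\bZt\|_2-\mathbb{E}\|\bZt\|_2)}{2\sqrt{nR_\epsilon}}$. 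Hence $\mathbb{E}[(F_{n,\epsilon}'-f_{n,\epsilon}')^2]\le 2\,\mathbb{E}[(\widehat F_{n,\epsilon}'-\widehat f_{n,\epsilon}')^2]+\frac{S^2}{2nR_\epsilon}\,\mathrm{Var}(\|\bZt\|_2)$, and the last term is harmless because $\|\bZt\|_2$ is $1$‑Lipschitz in $\bZt$, so $\mathrm{Var}(\|\bZt\|_2)\le1$ by Gaussian concentration, and $R_\epsilon\ge\epsilon\ge s_n$, so integrated over $\epsilon$ it contributes only $O(1/n)$.

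The first, main term is then handled exactly as in Lemma~\ref{diseorder-fluctuations-rcw}, using two inputs. (i) A concentration‑of‑free‑energy bound analogous to Proposition~\ref{concentrationtheorem}: $\mathbb{E}[(F_{n,\epsilon}(R_\epsilon)-f_{n,\epsilon}(R_\epsilon))^2]\le C(S)/n$, proven by the Efron--Stein inequality — bounded differences in the $X_i$'s and Gaussian Poincaré in the $Z_{ij}$'s and $\widetilde Z_i$'s — where the bounded support of $P_0$ makes each per‑coordinate derivative $O(1/n)$ (the $\sim n^2$ variables $Z_{ij}$ each contributing $O(n^{-3})$, the $\sim n$ variables $X_i,\widetilde Z_i$ each $O(n^{-2})$); combined with $\mathrm{Var}(\|\bZt\|_2)\le1$ this gives $\mathbb{E}[(\widehat F_{n,\epsilon}(u)-\widehat f_{n,\epsilon}(u))^2]\le C(S)/n$ uniformly for $u$ in any bounded range. (ii) Lemma~\ref{lemmaConvexity} applied to the concave pair $(\widehat F_{n,\epsilon},\widehat f_{n,\epsilon})$ at $R_\epsilon$: after squaring and taking $\mathbb{E}$ (using $(\sum_{i=1}^{5}a_i)^2\le5\sum a_i^2$) this bounds $\mathbb{E}[(\widehat F_{n,\epsilon}'(R_\epsilon)-\widehat f_{n,\epsilon}'(R_\epsilon))^2]$ by $15\,C(S)\,\delta^{-2}/n$ plus $5\,\widehat C_\delta^{+}(R_\epsilon)^2+5\,\widehat C_\delta^{-}(R_\epsilon)^2$, where the boundary terms satisfy $\widehat C_\delta^{\pm}\le2\sup|\widehat f_{n,\epsilon}'|\le\rho+S/\sqrt{s_n}$ since $R_\epsilon\ge s_n$ and $\mathbb{E}\|\bZt\|_2\le\sqrt n$.

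Finally I would integrate over $\epsilon\in[s_n,2s_n]$, using the regularity of $\epsilon\mapsto R_\epsilon(t)$ (Jacobian $\ge1$) to change variables $\epsilon\to R_\epsilon$ and telescope the $\widehat C_\delta^{\pm}$ terms as in the Curie--Weiss appendix: $\int_{s_n}^{2s_n}d\epsilon\,\widehat C_\delta^{\pm}(R_\epsilon)^2\le(\rho+S/\sqrt{s_n})\int_{R_{s_n}(t)}^{R_{2s_n}(t)}dR\,\widehat C_\delta^{\pm}(R)\le(\rho+S/\sqrt{s_n})\cdot 2\delta(\rho+S/\sqrt{s_n})\lesssim \delta(\rho^2+S^2/s_n)$. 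This yields a bound of the form $C(S)\big(\tfrac1n+\tfrac{s_n}{n\delta^2}+\tfrac{\delta}{s_n}\big)$, the $\delta/s_n$ (rather than $\delta$, as in the Curie--Weiss case) being exactly the price of the $1/\sqrt{s_n}$‑sized derivative of the concavity correction near $R_\epsilon\simeq s_n$. Optimising over $\delta\asymp s_n^{2/3}n^{-1/3}$ gives $\int_{s_n}^{2s_n}d\epsilon\,\mathbb{E}[(\langle\mathcal{L}\rangle_{t,\epsilon}-\mathbb{E}\langle\mathcal{L}\rangle_{t,\epsilon})^2]\le C(S)(s_nn)^{-1/3}$, as claimed, the $1/n$ term being dominated. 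The main obstacle — essentially the only place the argument departs from the Curie--Weiss template — is the non‑concavity of the non‑averaged free energy $F_{n,\epsilon}$ in $R_\epsilon$; finding an explicit, well‑concentrated additive correction that restores concavity without degrading the rate by more than a power of $s_n$ is the crux, and is what forces the slightly weaker exponent here.
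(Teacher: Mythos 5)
Your proof is correct and follows essentially the same strategy as the paper: concavify the non-averaged free energy $F_{n,\epsilon}$ by adding an explicit $\sqrt{R_\epsilon}$-term built from the noise $\bZt$, apply Lemma~\ref{lemmaConvexity} to the concave pair, feed in free-energy concentration, integrate over $\epsilon$ with the regularity/change-of-variables telescoping, and optimize $\delta\asymp s_n^{2/3}n^{-1/3}$. The only (cosmetic) difference is the correction: the paper uses $\frac{\sqrt{R_\epsilon}}{n}\,S\sum_i|\widetilde Z_i|$ while you use $\frac{S\|\bZt\|_2}{\sqrt n}\sqrt{R_\epsilon}$ via one extra Cauchy--Schwarz step, and correspondingly control the random prefactor by Gaussian Poincar\'e ($\mathrm{Var}\,\|\bZt\|_2\le1$) rather than by independence of the $|\widetilde Z_i|$; both give a variance of order $1/n$ and lead to the same rate.
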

\begin{proof}
The proof is similar than the one of Lemma \ref{diseorder-fluctuations-rcw} for the Curie-Weiss model in a random field, but 
with some more subtleties. Recall the free energies, seen as functions of $R_\epsilon\equiv R_\epsilon(t)$ are
$f_{n, \epsilon}(t,R_\epsilon(t))=f_{n, \epsilon}(R_\epsilon) \equiv-\frac{1}{n}\mathbb{E}\ln {\cal Z}_{n,\epsilon}(t; \bX, \bZ, \bZt)$ and 
$F_{n, \epsilon}(R_\epsilon) \equiv-\frac{1}{n}\ln {\cal Z}_{n,\epsilon}(t; \bX, \bZ, \bZt)$.
We have the following identities: for any given realisation of the quenched disorder
\begin{align}
 \frac{dF_{n, \epsilon}}{dR_\epsilon}  &= \langle \mathcal{L} \rangle_{t,\epsilon} \,,\label{first-derivative}\\
 \frac{1}{n}\frac{d^2F_{n, \epsilon}}{dR_\epsilon^2}  &= -\big\langle (\mathcal{L}  - \langle \mathcal{L} \rangle_{t,\epsilon})^2\big\rangle_{t,\epsilon}+
 \frac{1}{4 n^2R_\epsilon^{3/2}}\sum_{i=1}^n  \langle x_i\rangle_{t,\epsilon} \widetilde Z_i\,\label{second-derivative}.
\end{align}
Averaging \eqref{first-derivative} and \eqref{second-derivative}, using a Gaussian integration by parts w.r.t. $\widetilde Z_i$ and the Nishimori identity
$\mathbb{E}\langle x_i X_i\rangle_{t,\epsilon}  =  \mathbb{E}[\langle x_i\rangle_{t,\epsilon}^2]$ we find 
\begin{align}
 \frac{df_{n,\epsilon}}{d R_\epsilon} &= \mathbb{E}\langle \mathcal{L} \rangle_{t,\epsilon} 
 =  -\frac{1}{2n} \sum_{i=1}^n\mathbb{E}[\langle x_i\rangle_{t,\epsilon}^2]\,,\label{first-derivative-average}\\
 \frac{1}{n}\frac{d^2f_{n,\epsilon}}{dR_\epsilon^2} &= -\mathbb{E}\big\langle (\mathcal{L} - \langle \mathcal{L} \rangle_{t,\epsilon})^2\big\rangle_{t,\epsilon}
 +\frac{1}{4n^2R_\epsilon} \sum_{i=1}^n  \mathbb{E}\big\langle (x_i - \langle x_i\rangle_{t,\epsilon})^2\big\rangle_{t,\epsilon}\,.\nonumber
\end{align}
A key ingredient is again the free energy concentration, i.e., Proposition \ref{concentrationtheorem} in appendix (A2) that applies here as well (see \cite{BarbierM17a,barbier2017phase} for the proof in a inference problem, based on standards methods, which applies verbatim to the present problem). 
%
% \begin{proposition}\label{concentrationtheorem}
% There exists $C(S)>0$ s.t. $\mathbb{E}[( F_{n, \epsilon}(t) - f_{n,\epsilon}(t))^2] 
% \leq 
% \frac{C(S)}{n}$.
% \end{proposition}
%
Consider the functions of $R_\epsilon$:
\begin{align}\label{new-free}
 & \widetilde F(R_\epsilon) \equiv F_{n, \epsilon}(R_\epsilon) +\frac{\sqrt{R_\epsilon}}{n} 
 S\sum_{i=1}^n\vert \widetilde Z_i\vert\,,
 \nonumber \\ &
 \widetilde f(R_\epsilon) \equiv \mathbb{E} \,\widetilde F(R_\epsilon)= f_{n, \epsilon}(R_\epsilon) + \frac{\sqrt{R_\epsilon}}{n} 
 S\sum_{i=1}^n \mathbb{E}\,\vert \widetilde Z_i\vert\,.
\end{align}
Because of 
\eqref{second-derivative} we see that the second derivative of $\widetilde F(R_\epsilon)$ is negative so that it is concave.
Note $F_{n,\epsilon}(R_\epsilon)$ itself is not necessarily concave in $R_\epsilon$, although $f_{n,\epsilon}(R_\epsilon)$ is.
Evidently $\widetilde f(R_\epsilon)$ is concave too.
Concavity then allows to use Lemma \ref{lemmaConvexity} in appendix (A2) as follows. First, from \eqref{new-free} we have 
\begin{align}\label{fdiff}
 \widetilde F(R_\epsilon) - \widetilde f(R_\epsilon) = F_{n, \epsilon}(R_\epsilon) - f_{n, \epsilon}(R_\epsilon) + \sqrt{R_\epsilon} S A 
\end{align} 
with $A \equiv \frac{1}{n}\sum_{i=1}^n (\vert \widetilde Z_i\vert -\mathbb{E}\,\vert \widetilde Z_i\vert)$.
Second, from \eqref{first-derivative}, \eqref{first-derivative-average} we obtain for the $R_\epsilon$-derivatives
\begin{align}\label{derdiff}
 \widetilde F'(R_\epsilon) - \widetilde f'(R_\epsilon) = 
\langle \mathcal{L} \rangle_{t,\epsilon}-\mathbb{E}\langle \mathcal{L} \rangle_{t, \epsilon} + \frac{SA}{2\sqrt{R_\epsilon}} \,.
\end{align}
From \eqref{fdiff} and \eqref{derdiff} it is then easy to show that Lemma \ref{lemmaConvexity} implies
\begin{align}\label{usable-inequ}
\vert \langle \mathcal{L}\rangle_{t,\epsilon} - \mathbb{E}\langle \mathcal{L}\rangle_{t,\epsilon}\vert&\leq 
\delta^{-1} \sum_{u\in \{R_\epsilon -\delta,\, R_\epsilon,\, R_\epsilon+\delta\}}
 \big(\vert F_{n, \epsilon}(u) - f_{n, \epsilon}(u) \vert + S\vert A \vert \sqrt{u} \big)\nonumber\\
 &\qquad\qquad\qquad\qquad
  + C_\delta^+(R_\epsilon) + C_\delta^-(R_\epsilon) + \frac{S\vert A\vert}{2\sqrt \epsilon} 
\end{align}
where $C_\delta^-(R_\epsilon)\equiv \widetilde f'(R_\epsilon-\delta)-\widetilde f'(R_\epsilon)\ge 0$ 
and $C_\delta^+(R_\epsilon)\equiv \widetilde f'(R_\epsilon)-\widetilde f'(R_\epsilon+\delta)\ge 0$. 
We used $R_\epsilon\ge \epsilon$ for the term $S\vert A\vert/(2\sqrt \epsilon)$. Note that $\delta$ will 
be chosen later on strictly smaller than $s_n$ so that $R_\epsilon -\delta \geq \epsilon - \delta \geq s_n -\delta$ remains 
positive. Remark that by independence of the noise variables $\mathbb{E}[A^2]  \le  a/n$ for some constant $a >0$. 
We square the identity \eqref{usable-inequ} and take its expectation. Then using $(\sum_{i=1}^pv_i)^2 \le p\sum_{i=1}^pv_i^2$, and 
that $R_\epsilon\le 1+\rho$, as well as the free energy concentration Proposition \ref{concentrationtheorem} of appendix (A2),
\begin{align}\label{intermediate}
 \frac{1}{9}\mathbb{E}\big[(\langle \mathcal{L}\rangle_{t,\epsilon} - \mathbb{E}\langle \mathcal{L}\rangle_{t,\epsilon})^2\big]
 &
 \leq \, 
 \frac{3}{n\delta^2} \big(C(S) +aS^2(1+\rho+\delta)\big) 
 \nonumber \\ & \qquad\qquad + C_\delta^+(R_\epsilon)^2 + C_\delta^-(R_\epsilon)^2
 + \frac{S^2a}{4n\epsilon} \, .
\end{align}
Recall $|C_\delta^\pm(R_\epsilon)|=|\widetilde f'(R_\epsilon\pm\delta)-\widetilde f'(R_\epsilon)|$. We have
\begin{align}
|\widetilde f'(R_\epsilon)|  \leq \frac12\Big(\rho  +\frac{S}{\sqrt R_\epsilon} \Big)\leq \frac12\Big(\rho  +\frac{S}{\sqrt \epsilon} \Big)\label{boudfprime}	
\end{align}
from \eqref{first-derivative-average}, \eqref{new-free} and $R_\epsilon\ge \epsilon$. Thus $|C_\delta^\pm(R_\epsilon)|\le \rho  +\frac{S}{\sqrt{\epsilon-\delta}}\le \rho  +\frac{S}{\sqrt{s_n-\delta}}$ as $\epsilon\ge s_n$. We reach
\begin{align*}
 \int_{s_n}^{2s_n} d\epsilon\, \big\{C_\delta^+(R_{\epsilon})^2 + C_\delta^-(R_{\epsilon})^2\big\}
 &\leq 
 \Big(\rho +\frac{S}{\sqrt {s_n-\delta}}\Big)
 \int_{s_n}^{2s_n} d\epsilon\, \big\{C_\delta^+(R_{\epsilon}) + C_\delta^-(R_{\epsilon})\big\}
 \nonumber \\ 
  &\leq  \Big(\rho +\frac{S}{\sqrt {s_n-\delta}}\Big)
 \int_{R_{s_n}}^{R_{2s_n}} dR_\epsilon\, \big\{C_\delta^+(R_\epsilon) + C_\delta^-(R_\epsilon)\big\}
 \nonumber \\ 
&\hspace{-4cm}= 
\Big(\rho +\frac{S}{\sqrt {s_n-\delta}}\Big)\Big[\Big(\widetilde f_{n,\epsilon}(R_{s_n}+\delta) - \widetilde f_{n,\epsilon}(R_{s_n}-\delta)\Big)+ \Big(\widetilde f_{n,\epsilon}(R_{2s_n}-\delta) - \widetilde f_{n,\epsilon}(R_{2s_n}+\delta)\Big)\Big]
\end{align*}
where we used that the Jacobian of the ${\cal C}^1$ diffeomorphism $\epsilon\mapsto R_\epsilon(t)$ is $\ge 1$ (by regularity) for 
the second inequality. The mean value theorem and \eqref{boudfprime} 
imply $|\widetilde f_{n, \epsilon}(R_\epsilon-\delta) - \widetilde f_{n, \epsilon}(R_{\epsilon}+\delta)|\le \delta(\rho  +\frac{S}{\sqrt{s_n-\delta}})$. Therefore
\begin{align*}
 \int_{s_n}^{2s_n} d\epsilon\, \big\{C_\delta^+(R_{\epsilon})^2 + C_\delta^-(R_{\epsilon})^2\big\}\leq 
 2\delta \Big(\rho +\frac{S}{\sqrt{s_n-\delta}}\Big)^2\,.
\end{align*}
Thus, integrating \eqref{intermediate} over $\epsilon\in [s_n, 2s_n]$ yields
\begin{align*}
 &\int_{s_n}^{2s_n} d\epsilon\, 
 \mathbb{E}\big[(\langle \mathcal{L}\rangle_{t,\epsilon} - \mathbb{E}\langle \mathcal{L}\rangle_{t, \epsilon})^2\big]\nonumber\\
 &\qquad\qquad\leq 27\big(C(S) +aS^2(1+\rho+\delta)\big) \frac{s_n}{n\delta^2}  +18\delta \Big(\rho +\frac{S}{\sqrt {s_n-\delta}}\Big)^2 +  \frac{9 S^2a\ln 2}{4n}  \,.
\end{align*}
Finally we optimize the bound choosing $\delta  =  s_n^{2/3} n^{-1/3}$ and obtain the desired result.	
\end{proof}

\subsection*{(A6) Proof of inequality \eqref{remarkable}}\label{proof:remarkable_id}
Let us drop the indices in the bracket $\langle -\rangle_{t,\epsilon}$ and recall $R_\epsilon\equiv R_\epsilon(t)$. We start by proving 
\begin{align}
-2\,\mathbb{E}\big\langle Q(\mathcal{L} - \mathbb{E}\langle \mathcal{L}\rangle)\big\rangle
&=\mathbb{E}\big\langle (Q - \mathbb{E}\langle Q \rangle)^2\big\rangle
+ \mathbb{E}\big\langle (Q-   \langle Q \rangle)^2\big\rangle\,.\label{47}
\end{align}
Using the definitions $Q \equiv \frac1n \sum_{i=1}^{n} x_i X_i$ and \eqref{def_L} gives
\begin{align}
2\,\mathbb{E}\big\langle Q ({\cal L} -\mathbb{E}\langle {\cal L} \rangle) \big\rangle
	& = \frac{1}{n^2} \sum_{i,j=1}^{n} \Big\{\mathbb{E} \Big [  X_i \langle x_i x_j^2 \rangle - 2X_i X_j \langle x_i x_j \rangle - \frac{\widetilde Z_j}{\sqrt{R_\epsilon}} X_i \langle x_i x_j \rangle \Big ] \nonumber \\
	& \qquad\qquad \quad-  \mathbb{E} [X_i \langle x_i \rangle ] \, \mathbb{E} \Big [ \langle x_j^2 \rangle - 2X_j \langle x_j \rangle - \frac{\widetilde Z_j}{\sqrt{R_\epsilon}} \langle x_j \rangle \Big ]\Big\}\,. \label{eq:QL:1}
\end{align}
The Gaussian integration by part formula $\mathbb{E}[Zg(Z)]=\mathbb{E}\,g'(Z)$ for $Z\sim {\cal N}(0,1)$ yields
\begin{align*}
\mathbb{E}\Big[\frac{\widetilde Z_j}{\sqrt{R_\epsilon}} X_i \langle x_i x_j \rangle\Big] 
	& = \mathbb{E}[ X_i \langle x_i x_j^2 \rangle - X_i\langle x_i x_j \rangle \langle x_j \rangle ]\,, \quad\text{and} \quad \mathbb{E}\Big[\frac{\widetilde Z_j}{\sqrt{R_\epsilon}} \langle x_j \rangle\Big] 
	 =   \mathbb{E}[ \langle x_j^2 \rangle - \langle x_j \rangle^2 ]\,.
\end{align*}
These simplify \eqref{eq:QL:1} to
\begin{align}
&2\,\mathbb{E}\big\langle Q ({\cal L} -\mathbb{E}\langle {\cal L} \rangle ) \big\rangle\nonumber\\
	 &\qquad\qquad= \frac{1}{n^2} \sum_{i,j=1}^{n}\big\{ \mathbb{E} [ X_i \langle x_j \rangle \langle x_i x_j \rangle - 2X_i X_j \langle x_i x_j \rangle ] -  \mathbb{E} [X_i \langle x_i \rangle ] \,\mathbb{E} [ \langle x_j \rangle^2 - 2X_j \langle x_j \rangle ]\big\}\,.\label{toSimp}
\end{align}
The Nishimori identity Lemma~\ref{NishId} implies
\begin{align*}
 \mathbb{E}[\langle x_j \rangle^2 ] = \mathbb{E}[X_j \langle x_j \rangle]\,, \quad \text{and}\quad \mathbb{E}[X_i \langle x_j \rangle \langle x_i x_j \rangle] = \mathbb{E}[ \langle x_i\rangle \langle x_j \rangle \langle x_i x_j \rangle ] = \mathbb{E}[ \langle x_i \rangle \langle x_j \rangle X_i X_j ]\,.
\end{align*}
These further simplify \eqref{toSimp} to
\begin{align*}
2\,\mathbb{E}\big\langle Q ({\cal L} -\mathbb{E}\langle {\cal L} \rangle) \big\rangle & = \frac{1}{n^2} \sum_{i,j=1}^{n}\big\{ \mathbb{E} [  \langle x_i \rangle \langle x_j \rangle X_i X_j - 2X_i X_j \langle x_i x_j \rangle ] +  \mathbb{E} [X_i \langle x_i \rangle ] \,\mathbb{E} [ X_j \langle x_j \rangle ]\big\}\nonumber\\
	& = \mathbb{E}[\langle Q\rangle^2] - 2\,\mathbb{E}\langle Q^2\rangle +  \mathbb{E}[\langle Q\rangle]^2 \\
	& = -  \big ( \mathbb{E}\langle Q^2\rangle - \mathbb{E}[\langle Q\rangle]^2 \big ) -  \big ( \mathbb{E}\langle Q^2\rangle - \mathbb{E}[\langle Q\rangle^2] \big ) 
\end{align*}
which is \eqref{47}. 

The identity \eqref{47} just proven then implies
\begin{align*}
2\big|\mathbb{E}\big\langle Q(\mathcal{L} - \mathbb{E}\langle \mathcal{L}\rangle)\big\rangle\big|=2\big|\mathbb{E}\big\langle (Q-\mathbb{E}\langle Q \rangle)(\mathcal{L} - \mathbb{E}\langle \mathcal{L}\rangle)\big\rangle\big|
\ge \mathbb{E}\big\langle (Q - \mathbb{E}\langle Q \rangle)^2\big\rangle
\,.
\end{align*}
An application of the Cauchy-Schwarz inequality then gives
\begin{align*}
2\big\{\mathbb{E}\big\langle (Q-\mathbb{E}\langle Q \rangle)^2\big\rangle\, \mathbb{E}\big\langle(\mathcal{L} - \mathbb{E}\langle \mathcal{L}\rangle)^2\big\rangle \big\}^{1/2}	\ge\mathbb{E}\big\langle (Q - \mathbb{E}\langle Q \rangle)^2\big\rangle\,.
\end{align*}
This ends the proof of \eqref{remarkable}.

% BibTeX users please use one of
%\bibliographystyle{spbasic}      % basic style, author-year citations
% \bibliographystyle{spmpsci}      % mathematics and physical sciences
\bibliographystyle{unsrt_abbvr}      % mathematics and physical sciences
\bibliography{refs}   % name your BibTeX data base

% % Non-BibTeX users please use
% \begin{thebibliography}{}
% %
% % and use \bibitem to create references. Consult the Instructions
% % for authors for reference list style.
% %
% \bibitem{RefJ}
% % Format for Journal Reference
% Author, Article title, Journal, Volume, page numbers (year)
% % Format for books
% \bibitem{RefB}
% Author, Book title, page numbers. Publisher, place (year)
% % etc
% \end{thebibliography}

\end{document}